\newtheorem{example}{Example}
\newtheorem{theorem}{Theorem}
\newtheorem{proposition}{Proposition}
\newtheorem{corollary}{Corollary}
\theoremstyle{definition}
\newtheorem{definition}{Definition}
\newcommand{\MDJR}{{\tt MDJR}\xspace}
\newcommand{\MDEJR}{{\tt MDEJR}\xspace}
\newcommand{\MREJR}{{\tt MREJR}\xspace}
\newcommand{\NEWSAT}{sparse-SAT\xspace}
\title{The Degree of (Extended) Justified Representation and Its Optimization}
\author{%
  Biaoshuai Tao \\
  John Hopcroft Center for Computer Science\\
  Shanghai Jiao Tong University \\
  \texttt{bstao@sjtu.edu.cn} \\
  \and
  Chengkai Zhang \\
  Shanghai Jiao Tong University \\
  \texttt{zhangchengkai@sjtu.edu.cn} \\
  \and
  Houyu Zhou \\
  UNSW Sydney \\
  \texttt{houyu.zhou@unsw.edu.au} \\
}
\date{}
\begin{document}

\maketitle 

\begin{abstract}
Justified Representation (JR)/Extended Justified Representation (EJR) is a desirable axiom in multiwinner approval voting. In contrast to that (E)JR only requires at least \emph{one} voter to be represented in every cohesive group, we study its optimization version that maximizes the \emph{number} of represented voters in each group. Given an instance, we say a winning committee provides a JR degree (EJR degree, resp.) of $c$ if at least $c$ voters in each $\ell$-cohesive group ($1$-cohesive group, resp.) have approved $\ell$ ($1$, resp.) winning candidates. Hence, every (E)JR committee provides the (E)JR degree of at least $1$. Besides proposing this new property, we propose the optimization problem of finding a winning committee that achieves the maximum possible (E)JR degree, called \MDJR and \MDEJR, corresponding to JR and EJR respectively.

We study the computational complexity and approximability of \MDJR of \MDEJR. An (E)JR committee, which can be found in polynomial time, straightforwardly gives a $(k/n)$-approximation. We also show that the original algorithms for finding a JR and an EJR winner committee are also $1/k$ and $1/(k+1)$ approximation algorithms for \MDJR and \MDEJR respectively. On the other hand, we show that it is NP-hard to approximate \MDJR and \MDEJR to within a factor of $\left(k/n\right)^{1-\epsilon}$ and to within a factor of $(1/k)^{1-\varepsilon}$, for any $\epsilon>0$, which complements the positive results. Next, we study the fixed-parameter-tractability of this problem. We show that both problems are W[2]-hard if $k$, the size of the winning committee, is specified as the parameter. However, when $c_{\text{max}}$, the maximum value of $c$ such that a committee that provides an (E)JR degree of $c$ exists, is additionally given as a parameter, we show that both \MDJR and \MDEJR are fixed-parameter-tractable.
\end{abstract}

\newcommand{\BibTeX}{\rm B\kern-.05em{\sc i\kern-.025em b}\kern-.08em\TeX}


\section{Introduction}
An approval-based committee voting rule (ABC rule) plays a crucial role in collective decision-making by determining a committee from a set of $m$ candidates $C=\{c_1,\dots,c_m\}$, a set of $n$ voters $N=\{1,\dots,n\}$ who each approve a subset $A_i$ of $C$, and an integer $k$ representing the desired committee size. Let the list $\mathbf{A}=(A_1,\dots,A_n)$ of approval ballots be the ballot profile. Formally, ABC rules take a tuple $(N,C,\mathbf{A},k)$ as input, where $k$ is a positive integer that satisfies $k\le |C|$, and return one or more 
size-$k$ subsets $W\subseteq C$, which are called the winning committees. In many concrete voting systems a tiebreaking method is included so that a resolute outcome is guaranteed. In some cases, $N$ and $C$ are omitted from the notation when they are clear from the context. ABC rules are widely applied in various contexts, including the election of representative bodies (such as supervisory boards and trade unions), identifying responses to database queries \cite{DBLP:conf/www/DworkKNS01,DBLP:conf/ijcai/SkowronLBPE17,DBLP:conf/ijcai/IsraelB21}, selecting validators in consensus protocols like blockchain \cite{DBLP:conf/aft/CevallosS21}, making collective recommendations for groups \cite{DBLP:conf/ijcai/LuB11,DBLP:conf/aaai/LuB15}, and facilitating discussions on proposals within liquid democracy \cite{behrens2014principles}. Additionally, as committee elections fall under the domain of participatory budgeting (PB) \cite{cabannes2004participatory,goel2016knapsack}, a strong grasp of ABC rules is indispensable for designing effective PB methods.

The applicability of various ABC rules often depends on the particular context, yet an important requirement one often imposes on an ABC rule is that it can accurately reflect the voters' preferences, e.g., every large group of voters should justify a seat in the committee. In recent years, a desirable axiom has been proposed and developed, \emph{Justified Representation} (JR), which requires every group of at least $n/k$ voters that have at least one common candidate should be represented.

Given a profile $\mathbf{A}$ of $n$ approval preferences, a subgroup of voters, denoted by $N'\subseteq N$ is called an $\ell$-cohesive group for some $\ell \in \mathbb{N}$, if $|N'| \ge \ell \cdot \frac{n}{k}$ and $|\bigcap_{i\in N'} A_i| \ge \ell$. If $\ell=1$, we say it is a cohesive group for short.

\begin{definition}[Justified representation (JR)]\emph{\cite{DBLP:journals/scw/AzizBCEFW17}}
    Given a ballot profile $\mathbf{A}=(A_1,\dots, A_n)$ over a candidate set $C$ and a committee size $k$, we say that a set of candidates $W$ of size $|W| = k$ satisfies JR for $(\mathbf{A}, k)$ if, for every cohesive group (defined right above), there is at least one voter which approves at least one candidate in $W$.

    We say that an ABC rule satisfies JR if for each profile $\mathbf{A}$ and committee size $k$, each winning committee provide JR.
\end{definition}

Many well-known ABC rules have been shown to satisfy JR, from a simple greedy approval voting rule to a more complex proportional approval voting (PAV) rule. Hence, some stronger JR axioms are proposed to distinguish the existing approval voting rules, such as \emph{Extended Justified Representation} (EJR). 
EJR requires that in every $\ell$-cohesive group, at least one member is represented. Hence, EJR implies JR.

\begin{definition}[Extended Justified Representation]\emph{\cite{DBLP:journals/scw/AzizBCEFW17}}
\label{def:EJR}
    Given a ballot profile $\mathbf{A}=(A_1,\dots, A_n)$ over a candidate set $C$ and a committee size $k$, we say that a set of candidates $W$ of size $|W| = k$ satisfies EJR for $(\mathbf{A}, k)$ if, for every $\ell$-cohesive group with every $\ell\in[k]$, there is at least one voter which approves at least $\ell$ candidates in $W$.

    We say that an ABC rule satisfies EJR if for each profile $\mathbf{A}$ and committee size $k$, each winning committee provide EJR.
    
\end{definition}

While there are a wealth of concepts associated with JR, there are instances where comparing the performance of different winning committees becomes impossible. To illustrate this situation, consider a brief example.

\begin{example} \label{eg: tiny}
    Let $k=1$, $N=\{1,2,3,4\}$, $C=\{c_1,c_2,c_3,c_4\}$, $A_1=\{c_1\}$, and $A_i=A_{i-1}\cup\{c_i\}$ for every $i=2,3,4$. We are interested in four different committees $W_i=\{c_i\}$ for every $i\in [4]$.
\end{example}

Since Example \ref{eg: tiny} is a single winner voting instance,  we have $k=\ell=1$. Clearly, $N$ is the only cohesive group in this example, and all four committees in Example \ref{eg: tiny} satisfy all the above JR axioms: besides the JR and EJR axioms mentioned earlier, one can verify that those committees also satisfy some other JR axioms such as FJR \cite{DBLP:conf/nips/PierczynskiSP21}, PJR+, and EJR+ \cite{DBLP:conf/sigecom/Brill023}. However, $W_{i-1}$ is intuitively better than $W_i$ since it satisfies more voters. 
To distinguish the performance of these committees, besides proposing new JR variants, another approach to addressing this scenario involves leveraging \emph{quantitative} techniques to model a hierarchy of JR, which has a similar spirit to the work of using quantitative techniques to model proportionality. 

\begin{definition}[Proportionality degree] \label{def:prop-degree}
\cite{DBLP:conf/sigecom/Skowron21}
    Fix a function $f: \mathbb{N}\rightarrow \mathbb{R}$. An ABC rule has a \emph{proportionality degree} of $f$ if for each instance $(A,k)$, each winning committee $W$, and each $\ell$-cohesive group $V$, the average number of winners that voters from $V$ approve is at least $f(\ell)$, i.e.,
    \begin{equation*}
        \frac{1}{|V|}\sum_{i\in V}|A_i\cap W|\ge f(\ell).
    \end{equation*}
\end{definition}

Since $k=1$ in Example \ref{eg: tiny}, it is not hard to see that $W_1$, $W_2$, $W_3$, $W_4$ achieve the proportionality degree of $1$, $3/4$, $1/2$, $1/4$, respectively. 

Within each cohesive group, the proportionality degree measures the \emph{average satisfaction}, which may not align with the \emph{number of represented voters}.
Specifically, for a cohesive group where not all voters are represented, optimizing the proportionality degree may further increase the satisfaction of voters who are already represented while leaving some other voters completely unrepresented.
The following example demonstrates this.

\begin{example}
    Let $k=3$, $N=\{1,2,\ldots,9\}$, and $C=\{c_1,c_2,\ldots,c_6\}$, where
    \begin{itemize}
        \item $c_1$ is approved by voters $1,2,3,4,5$;
        \item $c_2$ is approved by voters $4,5,6,7,8$;
        \item $c_3$ is approved by voters $7,8,9,1,2$;
        \item each of the candidates $c_4,c_5,c_6$ is approved by voters $1,2,4,5,7,8$.
    \end{itemize}
    In this example, the winner committee that optimizes the proportionality degree is $\{c_4,c_5,c_6\}$, where we have $f(1)=2$ and $f(2)=3$, and the value of $f$ is maximized at both $1$ and $2$ (where $f$ is the function defined in Definition~\ref{def:prop-degree} but with a fixed instance).
    In this case, voters $3,6,9$ are completely unsatisfied.

    On the other hand, under the winner committee $\{c_1,c_2,c_3\}$, every voter is represented in both JR and EJR senses.
    To see this, all voters approve at least one winner, so all voters in each $1$-cohesive group are represented; there is only one $2$-cohesive group $\{1,2,4,5,7,8\}$, and each voter in this cohesive group approves two winners.
    However, this winner committee is sub-optimal in the proportionality degree measurement:
    $f(1)=5/3$ and $f(2)=2$.
\end{example}

In the example above, the winner committee $\{c_4,c_5,c_6\}$ is clearly ``unfair'' to voters $3,6,9$, and the winner committee $\{c_1,c_2,c_3\}$ is much more appealing, at least in some applications.
Indeed, many social choice scenarios require the committee to represent as many voters as possible (e.g., Monroe's rule and the social coverage objective) rather than to increase the average satisfaction. Hence, while maximizing average satisfactions of cohesive groups measured by the proportionality degree is a natural goal in some applications, in some other scenarios, increasing the number of represented voters is a more appealing goal.
Motivated by these, we seek to define a new quantitative notion that describes the \emph{number of represented voters} (instead of the average satisfaction) in each cohesive group.

There is also a natural follow-up question: how can we find the committee that satisfies the optimal (maximum) justified representation degree, given an instance $(N, C, \mathbf{A}, k)$? In this paper, we propose the degree of (E)JR and study its optimization problem.


\subsection{Our New Notions}
In this paper, we study the degree of (E)JR and its optimization problem.

\paragraph{Proposal of new notion---(E)JR degree.} Intuitively, given a ballot instance, we say a winning committee provides \emph{an (E)JR degree of $c$} if at least $c$ voters in every cohesive group are represented. 
\begin{definition}[JR Degree]
    Given a ballot profile $\mathbf{A}=(A_1,\dots, A_n)$ over a candidate set $C$ and a committee size $k$, we say that a set of candidates $W$ of size $|W| = k$ \emph{achieves JR degree $c$} for $(\mathbf{A}, k)$ if, for every cohesive group, there are at least $c$ voters each of which approves at least $1$ candidate in $W$.
\end{definition}

\begin{definition}[EJR degree]
        Given a ballot profile $\mathbf{A}=(A_1,\dots, A_n)$ over a candidate set $C$ and a committee size $k$, we say that a set of candidates $W$ of size $|W| = k$ \emph{achieves EJR degree $c$} for $(\mathbf{A}, k)$ if, for every $\ell$-cohesive group with every $\ell\in[k]$, there are at least $c$ voters each of which approves at least $\ell$ candidates in $W$.
\end{definition}

In this paper, we only consider ballot instances where at least one cohesive group exists, to avoid the uninteresting degenerated case (with no cohesive group) that invalidates the above two definitions.

From the above definitions, a winner committee satisfying JR has a JR degree of at least $1$, and a winner committee satisfying EJR has an EJR degree of at least $1$.
In Example \ref{eg: tiny}, $W_1$, $W_2$, $W_3$, $W_4$ achieve the (E)JR degree of 4,3,2,1, respectively. 

\paragraph{Relationship between JR degree and EJR degree.}
In general, given any ballot instance with $n$ voters and a winner committee of size $k$, one can see that both the JR degree and the EJR degree are at most $\lceil\frac{n}{k}\rceil$: if a cohesive group exists, there is always a cohesive group with size exactly $\lceil\frac{n}{k}\rceil$.
On the other hand, it is widely known that an EJR committee always exists (and so does a JR committee), and many algorithms are known to find an EJR committee \cite{DBLP:journals/scw/AzizBCEFW17,DBLP:conf/aaai/ElkindFIMSS22} some of which run in polynomial time \cite{aziz2018complexity,DBLP:conf/nips/PierczynskiSP21,DBLP:conf/sigecom/Brill023}.
Therefore, for any ballot instance, the maximum degrees for both JR and EJR are at least $1$ and at most $\lceil\frac{n}{k}\rceil$.

Since EJR implies JR, it is easy to see from our definitions that, if a winner committee provides an EJR degree of $c$, its JR degree is at least $c$. 
However, a winner committee may have a higher JR degree than its EJR degree. In addition, the winner committee that maximizes the EJR degree may not be the same as the winner committee that maximizes the JR degree.
The following proposition, whose proof is deferred to Appendix~\ref{append:difference}, shows that the difference between the JR degree and the EJR degree can be significant.

\begin{proposition}\label{prop:difference}
    For any $\gamma>0$, there is a ballot instance with maximum JR degree $c^\ast_{\text{JR}}$ and maximum EJR degree $c^\ast_{\text{EJR}}$ such that
    \begin{enumerate}
        \item $c^\ast_{\text{JR}}/c^\ast_{\text{EJR}}>\gamma$,
        \item all winner committees with EJR degree $c^\ast_{\text{EJR}}$ have JR degrees at most $c^\ast_{\text{JR}}/\gamma$, and
        \item all winner committees with JR degree $c^\ast_{\text{JR}}$ have EJR degrees at most $c^\ast_{\text{EJR}}/\gamma$.
    \end{enumerate}
\end{proposition}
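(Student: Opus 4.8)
The plan is to exhibit, for each $\gamma$, an explicit profile built from two competing ``gadgets''. The first is a \emph{spreading gadget}: a collection of small $1$-cohesive groups together with enough broadly approved candidates that a single committee can cover all of these groups (indeed, every voter) in the $1$-winner sense, making the maximum JR degree large. The second is a \emph{concentration gadget}: a single $\ell$-cohesive group $V$ (with $\ell$ growing with $\gamma$) whose members each approve $V$'s common block $D$ of $\ell$ candidates plus one private candidate, where the candidates in $D$ are approved \emph{only} by members of $V$. The point of the design is the asymmetry: viewed as a $1$-cohesive group, $V$ is covered by a single candidate of $D$ and is cheap; viewed as an $\ell$-cohesive group, it can be served only by committing roughly $\ell$ seats to $D$, and those seats are then useless for the spreading gadget. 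Choosing $k$, $n$, and the gadget sizes as explicit functions of $\gamma$ (with $n/k$, the common upper bound on both degrees, taken much larger than $\gamma$), I would tune the seat budget so that covering the spreading gadget well and giving $\ell$ winners to more than a $\gamma$-fraction of $V$ cannot be done by the same committee.

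Concretely I would: (1) write down the two gadgets and their gluing, making the common-candidate structure precise so that $D$ is provably useless for any $1$-cohesive requirement (hence no JR-optimal committee wants it) and the private candidates of $V$ are provably useless for the $\ell$-winner requirement (hence no EJR-optimal committee can cheaply satisfy $V$ through them); (2) compute $c^\ast_{\text{JR}}$ and $c^\ast_{\text{EJR}}$, exhibiting a ``thin-cover'' committee attaining $c^\ast_{\text{JR}}$ and a ``$D$-heavy'' committee attaining $c^\ast_{\text{EJR}}$, and showing by a seat-counting argument that no committee beats either value, which gives item~1; (3) establish items~2 and~3 by a case analysis over all optimal committees: any $W$ with EJR degree $c^\ast_{\text{EJR}}$ is forced to spend enough seats on $D$ (or on $V$'s private candidates) that at least one spreading-gadget group is covered only to degree at most $c^\ast_{\text{JR}}/\gamma$; and any $W$ with JR degree $c^\ast_{\text{JR}}$ must devote essentially all seats to the spreading gadget, so $|W\cap D|$ is small and only at most $c^\ast_{\text{EJR}}/\gamma$ members of $V$ get $\ell$ winners --- where one additionally checks, using the always-true inequality (EJR degree $\le$ JR degree), that this bound is not undercut by some $1$-cohesive group.

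The main obstacle, and the reason this is delicate, is step~(1)--(2): engineering the gadgets so that the budget conflict is \emph{genuine and robust} rather than ``balancing out''. The failure mode to avoid is that whenever the smallest cohesive group can be fully covered and $V$'s $\ell$ common candidates can be bought simultaneously, one gets $c^\ast_{\text{JR}}=c^\ast_{\text{EJR}}=\lceil n/k\rceil$ and no gap at all; since an $\ell$-cohesive group has size at least $\ell\cdot n/k$ and there are only $n$ voters, a single highly-cohesive group together with a disjoint family of minimal $1$-cohesive groups never by itself demands more than $k$ seats, so the construction has to force the conflict more cleverly --- making the $\ell$-winner demand cost $\Omega(\ell)$ seats strictly beyond what the $1$-cohesive requirements already consume, while those requirements are themselves nearly maximal, and ruling out every circumvention (covering several spreading-gadget groups through one shared candidate, using members' private candidates to meet the $\ell$-winner requirement, or trading a little JR coverage for a little of $D$). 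Verifying that \emph{all} optimal committees, not merely some, misbehave is where the bulk of the work lies, and choosing the parameters so that each of the two ``misbehaviours'' loses exactly a factor $\gamma$ is the concluding bookkeeping.
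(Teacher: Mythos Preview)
Your plan differs substantially from the paper's construction, and it has a genuine gap precisely at the point you yourself flag. You propose one large $\ell$-cohesive ``concentration'' block $V$ with common candidates $D$ against a family of disjoint minimal $1$-cohesive ``spreading'' groups, and you correctly observe the failure mode: since $|V|\ge \ell\cdot n/k$ and each spreading group has $n/k$ voters, a disjoint layout leaves at most $k-\ell$ spreading groups, so $\ell$ seats for $D$ plus $k-\ell$ for the spreading gadget fit exactly in $k$ and both objectives are maximized simultaneously. You then say ``the construction has to force the conflict more cleverly'' without saying how. That is the entire content of the proof; the private-candidate decoration on $V$ does nothing to break the budget balance, and overlapping the spreading groups alone is not enough either, because you must simultaneously keep each spreading group of size $\ge n/k$, keep their required candidates pairwise distinct, and prevent a single candidate from covering several of them. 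Until a concrete mechanism is given, items~1--3 cannot even be stated quantitatively.

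The paper avoids this obstacle by abandoning the ``one big $\ell$-cohesive group versus many $1$-cohesive groups'' dichotomy altogether. It builds $L{+}2$ groups $V_1,\dots,V_{L+2}$, each of size exactly $2n/k$, arranged on a cycle: each $V_i$ has its own pair of candidates $\{c_{2i-1},c_{2i}\}$, consecutive groups overlap (most overlaps have size~$1$, two special overlaps have size~$P$), and the committee size is $k=2L$, four short of the $2(L{+}2)$ candidates. The tension is thus not ``spend on $D$ or spread'' but rather ``select both candidates of a pair (good for $\ell{=}2$) versus select one candidate from every pair (good for $\ell{=}1$)''. With $k=2L$ seats one can afford one candidate per pair (since $L{+}2\le 2L$) and still double up on $L{-}2$ pairs, giving maximum JR degree $P^2/2$; but one cannot double up on all $L{+}2$ pairs, so some $2$-cohesive group is served only through its overlaps, capping the EJR degree around $P$. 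The asymmetric overlap sizes are what pin down the optimal committees and produce the $\Theta(P)$ gap. If you want to salvage your own approach, you would need an explicit overlap scheme among the spreading groups (or between them and $V$) that provably forces more than $k-\ell$ distinct seats for full JR coverage; as written, that step is missing.
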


While a committee satisfying EJR is generally considered better than one that only satisfies JR, the JR degree and EJR degree are not directly comparable. Intuitively, the EJR degree faces more challenge of satisfying voters in larger cohesive groups, i.e., $\ell$ approved winners needed for representing a voter in a $\ell$-cohesive group, whereas in the JR degree setting, a voter is represented if there is an approved winner. Depending on the scenario, different degree measurements may be more appropriate. Thus, both metrics are valuable and merit further study.

\paragraph{On optimizing (E)JR degree.}
The definition of (E)JR degree naturally motivates the following optimization problem, which we define as \MDJR/\MDEJR.

\begin{definition}[Maximum (E)JR degree, \MDJR (\MDEJR)]
    Given an instance $(A,k)$, \MDJR (\MDEJR) outputs one committee that achieves the maximum (E)JR degree.
\end{definition}


\paragraph{``Number'' versus ``fraction''}
We have defined the JR and EJR degrees based on the \emph{number} of represented voters.
Another natural way is to define both notions based on the \emph{fraction} of represented voters.

When dealing with the JR degree, both definitions are equivalent.
To see this, in terms of both numbers and fractions, the least satisfying cohesive group always contains exactly $\lceil n/k\rceil$ voters, as removing a represented voter from a cohesive group of more than $\lceil n/k\rceil$ voters would make both the number and the fraction of represented voters decrease.
Given that we are concerning the least satisfying cohesive group (which always has the same size), the ``number version'' and the ``fraction version'' of the JR degree are equivalent.

When dealing with the EJR degree, the two definitions are different.
The ``number version'' fits better with the spirit of EJR.
In the definition of EJR (see Definition~\ref{def:EJR}), the requirement is that \emph{one} voter needs to be represented (approve at least $\ell$ winners) in every $\ell$-cohesive group, instead of being that $\ell$ voters need to be represented in every $\ell$-cohesive group.
On the other hand, for some $\ell>1$, given an $\ell$-cohesive group with minimum size $\ell\cdot\frac nk$ and a $1$-cohesive group with minimum size $\frac nk$, if one voter is represented in the $1$-cohesive group, then $\ell$ voters in the $\ell$-cohesive group need to be represented in order to make the $\ell$-cohesive as ``happy'' as the $1$-cohesive group in the case the EJR degree is defined in the ``fraction version''.

\subsection{Our Technical Contributions}
In this paper, we focus on the computational complexity and the approximability of the optimization problems \MDJR and \MDEJR.
Our results are listed below.

\begin{itemize}
    \item We first show that the algorithm for finding a JR committee proposed by~\citet{DBLP:journals/scw/AzizBCEFW17} also provides a $\frac1k$-approximation to \MDJR, and the algorithm for finding an EJR committee proposed by~\citet{aziz2018complexity} also provides a $\frac1{k+1}$-approximation to \MDEJR.
    On the other hand, since the maximum (E)JR degree for both \MDJR and \MDEJR is $\lceil\frac nk\rceil$, the approximation guarantees for the two algorithms above can also be written as $1/\lceil\frac nk\rceil\approx k/n$.
    To complement these positive results, we show almost tight inapproximability results. We show that it is NP-hard to approximate \MDJR (\MDEJR) within a factor of $(k/n)^{1-\epsilon}$ for any $\epsilon>0$. We also show that it is NP-hard to approximate \MDJR (\MDEJR) within a factor of $(1/k)^{1-\epsilon}$ for any $\epsilon>0$.
    \item We study the fixed-parameter-tractability of this problem. We show that finding a committee with the maximum achievable (E)JR degree is W[2]-hard if $k$, the size of the winning committee, is specified as the parameter. 
    \item When the maximum achievable (E)JR degree of an instance is additionally given as a parameter, we show that the problem is fixed-parameter-tractable.
\end{itemize}

Surprisingly, although Proposition~\ref{prop:difference} demonstrates that JR degree and EJR degree have different natures, we obtain the same set of results for \MDJR and \MDEJR.

\subsection{Further Related Work}
\label{sec:relatedwork}
In this subsection, we discuss the related work on justified representation (JR). In addition to the JR axioms previously mentioned, several other JR-related axioms have been proposed and studied. \citet{DBLP:conf/aaai/FernandezELGABS17} introduced Proportional JR (PJR), which requires that every $\ell$-cohesive group have some $\ell$ winners represented in the union of their approval sets. PJR is weaker than Extended JR (EJR) but stronger than JR. The authors also proposed Perfect Representation (PR), aiming to represent all voters by some winners, with each winner representing $\frac{n}{k}$ voters. While a PR committee may not always exist, if one does, it can be verified that such a committee guarantees the maximum JR degree of $\frac{n}{k}$.

\citet{DBLP:conf/nips/PierczynskiSP21} introduced Fully JR (FJR), which weakens the cohesiveness requirement. It considers groups of $\ell\frac{n}{k}$ voters who share at least $\beta \leq \ell$ candidates in common. A committee satisfies FJR if every $(\ell, \beta)$-weak-cohesive group, where $\ell \in [k]$ and $\beta \in [\ell]$, has at least one member who approves $\beta$ winners. Notably, FJR implies EJR. \citet{DBLP:conf/sigecom/Brill023} proposed PJR+ and EJR+, which focus on ensuring PJR and EJR for groups of $\ell \frac{n}{k}$ voters, where at least one non-elected candidate is approved by all group members, as opposed to considering only $\ell$-cohesive groups. Consequently, EJR+ implies both EJR and PJR+, while PJR+ implies PJR.

\citet{DBLP:conf/aaai/BrillIM022} studied Individual Representation (IR), which requires that every voter in an $\ell$-cohesive group is represented. An IR committee would achieve the maximum (E)JR degree, although such committees may not always exist. The potential non-existence of IR and PR committees, compared to the guaranteed existence of (E)JR committees, motivates us to explore a quantitative measure bridging (E)JR, PR, and IR.

Moreover, JR has been investigated in other domains, such as fair division \cite{DBLP:conf/aaai/BeiLS22,DBLP:conf/aaai/Lu00BS23}, participatory budgeting \cite{DBLP:conf/atal/0001LT18,DBLP:conf/nips/PierczynskiSP21}, and facility location games \cite{DBLP:conf/atal/ElkindLZ22}. Other properties that assess a committee's proportionality, such as laminar proportionality and priceability, have also been considered \cite{DBLP:conf/sigecom/PetersS20}.

In our work, we study voting rules that maximize the (E)JR degree, particularly \MDJR and \MDEJR. In multi-winner voting, there are a variety of voting rules that maximize certain scores, collectively known as Thiele methods \cite{Theiele1895}. Thiele methods focus on maximizing the sum of voters' individual satisfaction, where a voter's satisfaction is determined by the number of approved candidates in the winning committee. While our \MDEJR maximization draws inspiration from this concept, \MDJR maximization fundamentally differs, as it focuses on maximizing the JR degree within each cohesive group rather than global satisfaction.

Monroe's rule \cite{monroe1995fully} shares a similar objective with our \MDJR approach, as it seeks to maximize the number of voters represented by at least one candidate in the winning committee. This mirrors our goal of maximizing the number of voters represented by at least one winner in every cohesive group. Other notable voting rules, such as Phragmén’s rules \cite{janson2016phragmen}, have also been studied in the context of multi-winner voting \cite{DBLP:series/sbis/LacknerS23}.

Finally, several well-known properties in multi-winner approval voting are relevant to our study, including anonymity and neutrality \cite{may1952set,moulin1991axioms,arrow2012social}, Pareto efficiency \cite{DBLP:journals/ai/LacknerS20}, monotonicity \cite{DBLP:conf/atal/FernandezF19}, consistency \cite{DBLP:journals/jet/LacknerS21}, and strategyproofness \cite{DBLP:conf/atal/Peters18}.

\section{Inapproximability of {\tt\textbf{MDJR}} and {\tt\textbf{MDEJR}}} \label{sec: np}
In this section, we study the approximability of \MDJR and \MDEJR.
We provide almost tight approximability of both problems, in terms of both $k$ and $k/n$.

As we mentioned before, the maximum degree for both \MDJR and \MDEJR is $\lceil\frac nk\rceil$.
On the other hand, any winner committee satisfying JR (EJR, resp.) gives a JR degree (EJR degree, resp.) of at least $1$.
Therefore, the algorithm for finding a JR (EJR, resp.) committee provides a $1/\lceil\frac nk\rceil\approx\frac kn$ approximation to \MDJR (\MDEJR, resp.).
In Sect.~\ref{sec:k-approx}, we will show that the algorithm for finding a JR committee proposed by~\citet{DBLP:journals/scw/AzizBCEFW17} also provides a $\frac1k$-approximation to \MDJR, and the algorithm for finding an EJR committee proposed by~\citet{aziz2018complexity} also provides a $\frac1{k+1}$-approximation to \MDEJR.

In Sect.~\ref{sec:inapproximability}, we show that the approximation ratios of approximately $k/n$ and $1/k$ mentioned above are almost tight.

\subsection{Approximation Algorithms}
\label{sec:k-approx}
The algorithm GreedyAV (Algorithm~\ref{alg: greedyAV}) proposed by~\citet{DBLP:journals/scw/AzizBCEFW17} always outputs a JR committee. We will show in Theorem~\ref{thm:MDJR-approximation} that it provides a $\frac1k$-approximation to \MDJR.

\begin{algorithm}[!htb]
\caption{Greedy Approval Voting (GreedyAV)}
\label{alg: greedyAV}
\KwIn{An instance $\mathcal{I} = (N, C, \mathbf{A}, k))$}
\KwOut{A winning committee $W$ of size $k$}
$W\gets \emptyset$

\For{$j\in [k]$}{
Let $c$ be the candidate approved by the maximum number of voters in $N$

$W\gets W\cup c$

$N\gets N\setminus V$ where $V$ is the set of voters who approve $c$
}
\Return $W$
\end{algorithm}

We prove the following proposition first.

\begin{proposition}\label{prop:MDJR-approximation}
    GreedyAV outputs a committee achieving JR degree at least $\frac{n}{k^2}$.
\end{proposition}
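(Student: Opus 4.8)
The plan is to fix an arbitrary cohesive group $N'$, with a common approved candidate $c^\ast\in\bigcap_{i\in N'}A_i$, and to lower-bound by $n/k^2$ the number of voters of $N'$ that end up approving some candidate in the committee $W$ returned by GreedyAV (this number is exactly the quantity that the JR-degree definition demands for the group $N'$). Since $N'$ is arbitrary, this gives the proposition. So the real work is a single accounting argument for one cohesive group.

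I would follow the run of GreedyAV round by round. For $j\in[k]$ let $d_j$ be the number of voters deleted in round $j$, and split $d_j=m_j+o_j$, where $m_j$ (resp.\ $o_j$) counts the deleted voters lying inside (resp.\ outside) $N'$; set $R=\sum_{j=1}^k m_j$, the total number of $N'$-voters ever deleted, which is what we must bound below. Let $r_j:=|N'|-\sum_{i<j}m_i$ be the number of $N'$-voters still present at the start of round $j$; all of them approve $c^\ast$. The first key inequality is $d_j\ge r_j$: if $c^\ast$ has not yet been added to $W$, then $c^\ast$ is approved by at least $r_j$ of the surviving voters, so the greedily chosen candidate is too; and if $c^\ast$ was chosen in an earlier round, then all $N'$-voters were already deleted, so $r_j=0\le d_j$ trivially. (I would also note the harmless edge case where a round would have to pick a candidate already in $W$, which forces the current voter set to be empty, hence $R=|N'|$.)

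Summing $d_j\ge r_j$ over $j$ gives $\sum_j d_j\ge\sum_j r_j$, and the crux is to evaluate $\sum_j r_j$ with a little care: a short index swap gives
\[
\sum_{j=1}^{k} r_j \;=\; k|N'|-\sum_{i=1}^{k-1}(k-i)\,m_i \;\ge\; k|N'|-(k-1)\sum_{i=1}^{k-1}m_i \;\ge\; k|N'|-(k-1)R,
\]
where we simply bounded each weight $k-i$ by $k-1$. On the other side, the deleted voters outside $N'$ across all rounds are distinct and all lie in $N\setminus N'$, so $\sum_j o_j\le n-|N'|$, hence $\sum_j d_j = R+\sum_j o_j\le R+n-|N'|$. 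Combining the two estimates, $k|N'|-(k-1)R\le R+n-|N'|$, i.e.\ $(k+1)|N'|-n\le kR$; since $|N'|\ge n/k$ this yields $R\ge\bigl((k+1)(n/k)-n\bigr)/k = n/k^2$.

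The main obstacle is getting the right constant rather than just a constant: the tempting one-line argument ``each round deletes at least $n/k$ voters and there are only $n$ voters'' produces no contradiction, and the slightly coarser bound $\sum_j r_j\ge k(|N'|-R)$ only delivers $R\ge n/(k(k+1))$. So one must set up the two-sided bookkeeping of deletions split by membership in $N'$ and, in the key sum, exploit that the weights are $(k-i)\le k-1$ (equivalently, that $m_k$ never diminishes any $r_j$). Everything else is routine; I would also dispose in one sentence of the degenerate regime $n/k^2<1$, which is already covered since GreedyAV satisfies JR and $R$ is a positive integer.
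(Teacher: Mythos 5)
Your proof is correct and rests on the same key observation as the paper's: in every round, the greedy pick must cover at least as many surviving voters as the group's common candidate $c^\ast$ does, and a global count against the total of $n$ voters then forces at least $n/k^2$ members of the cohesive group to be covered. The paper packages this as a contradiction using a uniform per-round lower bound (the number of permanently uncovered group members, plus the leftover at the end), while you run the same count directly with the round-by-round quantities $r_j$; the two bookkeepings are equivalent and yield the same constant.
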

\begin{proof}
    We prove it by contradiction. Suppose the committee output by GreedyAV provides the JR degree less than $\frac{n}{k^2}$. Hence, there will be at least $(\frac{n}{k}-\frac{n}{k^2}+1)$ voters approving no candidate in $W$ that approve a common candidate $c\notin W$. From the definition of GreedyAV, the coverage of voters is at least $(\frac{n}{k}-\frac{n}{k^2}+1)$ in each iteration, and at least $\frac{n}{k}$ in the first iteration (given that there is at least one cohesive group). Therefore, the total number of voters is at least
    \begin{equation*}\begin{aligned}
        \frac{n}{k}+(k-1)\left(\frac{n}{k}-\frac{n}{k^2}+1\right)+\left(\frac{n}{k}-\frac{n}{k^2}+1\right) =n+\frac{n}{k}-k\left(\frac{n}{k^2}-1\right)>n,
    \end{aligned}
    \end{equation*}
    which leads to a contradiction in the number of voters.
\end{proof}

Now, we are ready to show the approximation guarantee of GreedyAV.

\begin{theorem}\label{thm:MDJR-approximation}
    GreedyAV runs in polynomial time and provides a $\frac1k$-approximation to \MDJR.
\end{theorem}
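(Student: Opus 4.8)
The plan is to combine Proposition~\ref{prop:MDJR-approximation} with the already-established upper bound $\lceil n/k\rceil$ on the maximum achievable JR degree, paying a little attention to rounding. The running-time claim is immediate: GreedyAV performs $k$ iterations, and each iteration only needs to scan the at most $m$ candidates and $n$ voters to find the most-approved candidate and then delete the covered voters, so it runs in time $O(kmn)$, polynomial in the input size.

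For the approximation ratio, let $W$ be the committee returned by GreedyAV, let $d$ be its JR degree, and let $c^\ast$ be the maximum JR degree over all size-$k$ committees. Observe that $d$ is a nonnegative integer, since the JR degree of a committee is the minimum, over all cohesive groups, of the (integer) number of represented voters in that group. Proposition~\ref{prop:MDJR-approximation} gives $d\ge n/k^2$, so in fact $d\ge\lceil n/k^2\rceil$. On the other hand, as already noted in the text, $c^\ast\le\lceil n/k\rceil$ (the least-satisfying cohesive group can always be taken to have exactly $\lceil n/k\rceil$ voters). Hence it suffices to prove $k\lceil n/k^2\rceil\ge\lceil n/k\rceil$, because this yields $d\ge\lceil n/k^2\rceil\ge\lceil n/k\rceil/k\ge c^\ast/k$, which is exactly the claimed $\frac1k$-approximation.

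The inequality $k\lceil n/k^2\rceil\ge\lceil n/k\rceil$ I would obtain from the nested-ceiling identity $\lceil\lceil x/a\rceil/b\rceil=\lceil x/(ab)\rceil$ for positive integers $a,b$: with $x=n$ and $a=b=k$ this reads $\lceil n/k^2\rceil=\lceil\lceil n/k\rceil/k\rceil\ge\lceil n/k\rceil/k$, and multiplying through by $k$ is all that remains. If one prefers to avoid quoting the identity, writing $n=qk^2+r$ with $0\le r<k^2$ and checking the cases $r=0$ (both sides equal $qk$) and $r>0$ (left side $k(q+1)$, right side $qk+\lceil r/k\rceil\le qk+k$) separately works just as well.

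The only genuinely nontrivial point is this rounding step. Plugging the bare bounds $d\ge n/k^2$ and $c^\ast\le\lceil n/k\rceil$ directly into the ratio would only require $n/k\ge\lceil n/k\rceil$, which fails whenever $k\nmid n$; so one really has to exploit the integrality of $d$ to push $n/k^2$ up to $\lceil n/k^2\rceil$ before comparing it with $c^\ast/k$. Everything else is a direct consequence of Proposition~\ref{prop:MDJR-approximation} and the bound $c^\ast\le\lceil n/k\rceil$ recorded earlier.
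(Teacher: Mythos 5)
Your proposal is correct and follows essentially the same route as the paper: invoke Proposition~\ref{prop:MDJR-approximation}, use integrality of the JR degree to upgrade $n/k^2$ to $\lceil n/k^2\rceil$, and compare against the upper bound $\lceil n/k\rceil$ on the optimum via the inequality $\lceil n/k^2\rceil\ge\frac1k\lceil n/k\rceil$. The only (cosmetic) difference is that you establish this last inequality via the nested-ceiling identity $\lceil\lceil n/k\rceil/k\rceil=\lceil n/k^2\rceil$, which is a bit slicker than the paper's explicit case analysis writing $n=ak+b$ and $a=sk+t$.
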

\begin{proof}
    It is clear that the algorithm runs in polynomial time.
    To show the approximation guarantee, we show that the minimum JR degree of $\frac{n}{k^2}$ proved in Proposition~\ref{prop:MDJR-approximation}.
    To see this, notice that we have remarked that the maximum possible JR degree is $\lceil\frac nk\rceil$.
    On the other hand, since the JR degree is an integer, the minimum JR degree that Greedy AV can guarantee is actually $\lceil\frac{n}{k^2}\rceil$.
    It suffices show that
    $$\left\lceil\frac{n}{k^2}\right\rceil\geq\frac1k\cdot\left\lceil\frac nk\right\rceil.$$
    The inequality clearly holds if $n$ is a multiple of $k$.
    Suppose otherwise. Let $n=ak+b$ for non-negative integers $a,b$ with $1\leq b\leq k-1$.
    Let $a=sk+t$ for non-negative integers $s,t$ with $0\leq t\leq k-1$.
    For the left-hand side of the inequality, we have
    $$\left\lceil\frac{n}{k^2}\right\rceil=\left\lceil\frac{(sk+t)k+b}{k^2}\right\rceil\geq\left\lceil s+\frac{b}{k^2}\right\rceil\geq s+1,$$
    where the first inequality above is due to $t\geq0$ and the second inequality above is due to $b\geq1$.
    For the right-hand side of the inequality, we have
    $$\frac1k\cdot\left\lceil\frac nk\right\rceil=\frac1k\cdot(a+1)=\frac{sk+t+1}{k}\leq s+1, $$
    where the last inequality is due to that $t\leq k-1$.

    Putting together, we have $\left\lceil\frac{n}{k^2}\right\rceil\geq\frac1k\cdot\left\lceil\frac nk\right\rceil$, which implies the theorem.
\end{proof}

Similar to \MDJR, we first consider an EJR voting rule, \emph{proportional approval voting} (PAV) \cite{DBLP:journals/scw/AzizBCEFW17}, which outputs the committee that maximizes the PAV-score, where the PAV-score of a committee $W \subseteq C$ is defined as
\begin{equation*}
    s_{\text{PAV}}(W)=\sum_{i=1}^n\sum_{j=1}^{|A_i\cap W|} \frac{1}{j}.
\end{equation*}
In the PAV-score, each voter's ``utility'' is defined by the harmonic progression $H[t]$ for $t$ approved winning candidates and the PAV-score can then be understood as the \emph{social welfare}.
One may wonder whether PAV can provide the maximum EJR degree. We find that PAV fails to achieve the maximum EJR degree in some instances. A counterexample can be found in Appendix~\ref{exxample : PAV fail mdejr}.

In addition, PAV cannot be computed in polynomial time. \citet{aziz2018complexity} showed that a local search alternative algorithm for PAV can both satisfy EJR and be computed in polynomial time. 
The algorithm is described in Algorithm~\ref{alg:LS-PAV}. Starting from an arbitrary winning committee, the algorithm considers all possible single-candidate-replacements that increase the PAV score by at least $\lambda$ (where $\lambda$ is a parameter of the algorithm).
For each pair of candidate $c^+$ and $c^-$ with $c^+\notin W$ and $c^-\in W$, if we swap $c^+$ and $c^-$, i.e. to remove $c^-$ from the committee and select $c^+$ instead, the score is increased by 
$
    \Delta(W,c^+,c^-)=s_{\text{PAV}}(W \setminus \{c^-\} \cup \{c^+\})-s_{\text{PAV}}(W).
$

\begin{algorithm}[!htb]
\caption{$\lambda$-LS-PAV}
\label{alg:LS-PAV}
\KwIn{An instance $\mathcal{I} = (N, C, \mathbf{A}, k))$}
\KwOut{A winning committee $W$ of size $k$}
$W\gets k$ arbitrary candidates from $C$

\While{there exist $c^+\notin W$ and $c^-\in W$ such that $\Delta(W,c^+,c^-)\ge \lambda$}{
    $W \gets W \setminus \{c^-\} \cup \{c^+\}$
}
\Return $W$
\end{algorithm}

Next, we will show that $\lambda$-LS-PAV runs in polynomial time and provides a $\frac1{k+1}$-approximation to \MDEJR for a suitable choice of $\lambda$.

We show the following proposition first.
\begin{proposition}\label{prop:MDEJR-approximation}
    $\lambda$-LS-PAV outputs a committee achieving EJR degree at least $c^\ast=\frac{n}{k(k+1)}-\lambda\frac{k}{k+1}$ for any $\lambda\in[0,\frac n{k^2})$.
\end{proposition}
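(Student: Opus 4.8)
The plan is to strengthen the classical argument of~\citet{aziz2018complexity} showing that $\lambda$-LS-PAV returns an EJR committee, refining the counting so that it controls the \emph{number} of represented voters in every cohesive group rather than merely guaranteeing one. So suppose for contradiction that the output $W$ has EJR degree less than $c^\ast$: there is some $\ell\in[k]$ and an $\ell$-cohesive group $N'$ in which fewer than $c^\ast$ voters approve $\ell$ winners. Fix $\ell$ common candidates $\{a_1,\dots,a_\ell\}\subseteq\bigcap_{i\in N'}A_i$, write $r_i:=|A_i\cap W|$, and let $N''=\{i\in N':r_i\le\ell-1\}$ be the under-represented part of $N'$. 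Since the number of represented voters of $N'$ is an integer strictly below $c^\ast$, we get $|N''|>|N'|-c^\ast\ge\ell\frac nk-c^\ast$; and since $\lambda<\frac n{k^2}$ forces $0<c^\ast<\frac nk$, the set $N''$ is nonempty, so not all of $a_1,\dots,a_\ell$ can lie in $W$ and we may fix some $c^+=a_j\notin W$.

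Next I would use the termination condition: since $c^+\notin W$, every $c^-\in W$ satisfies $\Delta(W,c^+,c^-)<\lambda$, hence $\sum_{c^-\in W}\Delta(W,c^+,c^-)<k\lambda$. Expanding the PAV-score difference gives $\Delta(W,c^+,c^-)=\sum_{i:\,c^+\in A_i,\,c^-\notin A_i}\frac1{r_i+1}-\sum_{i:\,c^+\notin A_i,\,c^-\in A_i}\frac1{r_i}$. Summing over the $k$ candidates in $W$ and swapping the order of summation, the ``gain'' part equals $\sum_{i:\,c^+\in A_i}\frac{k-r_i}{r_i+1}$, which is at least $|N''|\cdot\frac{k-\ell+1}{\ell}$ because $r\mapsto\frac{k-r}{r+1}$ is decreasing and every $i\in N''$ approves $c^+$ and has $r_i\le\ell-1$. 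The crucial observation for the ``loss'' part is that every voter in $N''$ approves $c^+$, so these voters drop out of the loss entirely; the remaining loss is at most the number of voters outside $N''$ with $r_i\ge1$, hence at most $n-|N''|$. Putting these together,
\[
k\lambda>\sum_{c^-\in W}\Delta(W,c^+,c^-)\ \ge\ |N''|\cdot\frac{k-\ell+1}{\ell}-(n-|N''|)=|N''|\cdot\frac{k+1}{\ell}-n ,
\]
so $|N''|<\frac{\ell(n+k\lambda)}{k+1}$.

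Finally I would combine the two bounds on $|N''|$, namely $\ell\frac nk-c^\ast<|N''|<\frac{\ell(n+k\lambda)}{k+1}$, which rearranges to $c^\ast>\ell\cdot\frac{n-k^2\lambda}{k(k+1)}=\ell\, c^\ast$ after substituting the definition of $c^\ast$. As $\ell\ge1$ and $c^\ast>0$, this is impossible, establishing that the EJR degree is at least $c^\ast$. The step that requires genuine care — and where a naive repetition of the EJR proof would lose a constant factor and fail to yield the tight $c^\ast$ — is the loss bound: one must notice that the under-represented voters $N''$ all approve the inserted candidate $c^+$ and therefore contribute nothing to the loss, which is exactly what upgrades the coefficient from $\frac{k-\ell+1}{\ell}$ to $\frac{k+1}{\ell}$. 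Termination and polynomial running time for a suitable positive $\lambda$ are a separate matter, handled in the subsequent theorem; here I only use that $W$ is a local optimum of the $\lambda$-improvement dynamics.
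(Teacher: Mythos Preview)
Your proof is correct and follows essentially the same approach as the paper: assume a violating $\ell$-cohesive group, pick a common approved candidate $c^+\notin W$, lower-bound $\sum_{c^-\in W}\Delta(W,c^+,c^-)$ by $|N''|\cdot\frac{k+1}{\ell}-n$, and contradict the termination condition via pigeonhole. The only cosmetic difference is that the paper restricts the gain to the whole cohesive group $V$ first (obtaining $\sum_{i\in V}\frac{k+1}{r_i+1}-n$) and then drops to the under-represented voters, whereas you work with $N''$ from the outset; the resulting inequality and the final contradiction $c^\ast>\ell c^\ast$ are identical.
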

\begin{proof}
    We prove it by contradiction. Suppose the committee, $W$, output by $\lambda$-LS-PAV provides the EJR degree strictly smaller than $c^\ast$. There exists a $\ell$-cohesive group $V\subseteq N$, such that less than $c^\ast$ voters in $V$ approves at least $\ell$ members of $W$, i.e., $|\{i\in V:|A_i\cap W|\ge \ell\}|<c^\ast$. Since $V$ is $\ell$-cohesive, there exist $\ell$ candidates approved by all voters in $V$. At least one such candidate, $c^+\in \bigcap_{i\in V} A_i$, is not selected, as otherwise all voters in $V$ approve at least $\ell$ members of $W$. 
    
    We will show that there exists a candidate $c^-\in W$ such that the increment of the score by swapping $c^+$ and $c^-$ is at least $\lambda$, so $W$ cannot be an output of $\lambda$-LS-PAV, leading to a contradiction. To see this, we try to swap $c^+$ and any candidate $c^-\in W$. Since $c^+\in A_i$ for all $i\in V$, we have
    \begin{equation*}
    \begin{aligned}
        \Delta(W,c^+,c^-) =&\sum_{\substack{i:c^+\in A_i\\c^- \notin A_i}}\frac{1}{|A_i\cap W|+1}-\sum_{\substack{i:c^+\notin A_i\\c^- \in A_i}}\frac{1}{|A_i\cap W|}\\
        \ge& \sum_{i\in V:c^- \notin A_i}\frac{1}{|A_i\cap W|+1}-\sum_{i\in N\setminus V:c^- \in A_i}\frac{1}{|A_i\cap W|},
    \end{aligned}
    \end{equation*}
    and, by summing up $\Delta(W,c^+,c^-)$ for $c^-\in W$,
    \begin{equation*}\begin{aligned}
         \sum_{c^-\in W} \Delta(W,c^+,c^-) \ge &\sum_{c^-\in W}\sum_{\substack{i\in V: \\c^- \notin A_i}}\frac{1}{|A_i\cap W|+1}- \sum_{c^-\in W}\sum_{\substack{i\in N\setminus V:\\c^- \in A_i}}\frac{1}{|A_i\cap W|}\\
        = &\sum_{i\in V} \sum_{c^-:\substack{c^-\in W\\c^-\notin A_i}} \frac{1}{|A_i\cap W|+1} -\sum_{i\in N\setminus V} \sum_{c^-:\substack{c^-\in W\\c^-\in A_i}} \frac{1}{|A_i\cap W|}\\
        = &\left(\sum_{i\in V} \frac{k-|A_i\cap W|}{|A_i\cap W|+1}\right) - \left(n -|V|\right)\\
        = &\left(\sum_{i\in V} \frac{k+1}{|A_i\cap W|+1}\right) - n\\
        \ge &\left(\sum_{i\in V:|A_i\cap W|<\ell} \frac{k+1}{|A_i\cap W|+1}\right) - n\\
        \ge &\left(\sum_{i\in V:|A_i\cap W|<\ell} \frac{k+1}{\ell} \right)- n.
    \end{aligned}\end{equation*}
    Since less than $c^\ast$ voters in $V$ approve at least $\ell$ members of $W$, the number of voters in $V$ approve less than $\ell$ members of $W$ is at least $|V|-c^\ast+1$. Thus,
    \begin{equation*}\begin{aligned}
        \sum_{c^-\in W} \Delta(W,c^+,c^-) &\ge \left(\sum_{i\in V:|A_i\cap W|<\ell} \frac{k+1}{\ell}\right) - n \\
        &\ge (k+1)\frac{|V|-c^\ast+1}{\ell} - n\\
        &>(k+1)\left(\frac{|V|}\ell-c^\ast\right)-n\\
        &\geq(k+1)\left(\frac{n}k-c^\ast\right)-n\\
        &=\frac{n}{k}-(k+1)c^\ast\\
        &=\lambda k.
    \end{aligned}\end{equation*}
    From the pigeonhole principle, it follows that there exists a candidate $c^-\in W$ such that $\Delta(W,c^+,c^-)\ge\lambda$.
\end{proof}

Now, we are ready to conclude the approximation guarantee of $\lambda$-LS-PAV.

\begin{theorem}\label{thm:MDEJR-approximation}
    For $\lambda=\frac1{2k^2}$, $\lambda$-LS-PAV runs in polynomial time and provides a $\frac1{k+1}$-approximation to \MDEJR.
\end{theorem}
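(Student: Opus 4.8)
The plan is to separate the statement into two independent pieces: the polynomial running-time bound, and the $\frac1{k+1}$ approximation ratio, with the second piece built directly on top of Proposition~\ref{prop:MDEJR-approximation}.

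\textbf{Running time.} First I would bound the PAV-score of any committee: since each voter $i$ contributes $\sum_{j=1}^{|A_i\cap W|}\frac1j\le H_k$, where $H_k:=\sum_{j=1}^k\frac1j=O(\log k)$, every committee $W$ has $s_{\mathrm{PAV}}(W)\le nH_k$. The algorithm begins from a committee of nonnegative score, and each iteration of the while-loop strictly raises $s_{\mathrm{PAV}}(W)$ by at least $\lambda=\frac1{2k^2}$; hence the loop executes at most $nH_k/\lambda=2k^2nH_k=O(k^2n\log k)$ times, which is polynomial in the input size. Each iteration inspects the $O(mk)$ candidate pairs $(c^+,c^-)$ and evaluates $\Delta(W,c^+,c^-)$ in polynomial time, so the total running time is polynomial.

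\textbf{Approximation ratio.} Next I would check that $\lambda=\frac1{2k^2}$ lies in the admissible interval $[0,\frac n{k^2})$ required by Proposition~\ref{prop:MDEJR-approximation}; this holds since $n\ge1>\frac12$. Substituting $\lambda=\frac1{2k^2}$ into $c^\ast=\frac n{k(k+1)}-\lambda\frac k{k+1}$ yields $c^\ast=\frac{2n-1}{2k(k+1)}$, so by Proposition~\ref{prop:MDEJR-approximation} the returned committee has EJR degree at least $c^\ast$, and — the EJR degree being an integer — at least $\lceil c^\ast\rceil$. Write $q:=\lceil n/k\rceil$ and let $\mathrm{OPT}$ be the maximum achievable EJR degree; we have already remarked $\mathrm{OPT}\le q$, so it suffices to prove $\lceil c^\ast\rceil\ge\frac q{k+1}$. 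From $n\ge(q-1)k+1$ I obtain $c^\ast\ge\frac{2(q-1)k+1}{2k(k+1)}=\frac{q-1}{k+1}+\frac1{2k(k+1)}>\frac{q-1}{k+1}\ge\bigl\lfloor\tfrac{q-1}{k+1}\bigr\rfloor$, hence $\lceil c^\ast\rceil\ge\bigl\lfloor\tfrac{q-1}{k+1}\bigr\rfloor+1$. Finally, writing $q-1=a(k+1)+b$ with $0\le b\le k$, we get $\bigl\lfloor\tfrac{q-1}{k+1}\bigr\rfloor+1=a+1\ge a+\frac{b+1}{k+1}=\frac q{k+1}$, which closes the chain $(\text{EJR degree of output})\ge\lceil c^\ast\rceil\ge\frac q{k+1}\ge\frac{\mathrm{OPT}}{k+1}$.

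\textbf{Main obstacle.} The ideas are routine once Proposition~\ref{prop:MDEJR-approximation} is in hand; the delicate part is the integer bookkeeping at the end. The proposition only supplies the thin bound $c^\ast=\frac{2n-1}{2k(k+1)}$, which sits just below $\frac1{k+1}\cdot\frac nk$, so the $\frac1{k+1}$ guarantee survives only because \emph{both} $c^\ast$ and $\mathrm{OPT}$ get rounded — one must verify that this rounding never destroys the ratio, in particular in the tight case $(k+1)\mid\lceil n/k\rceil$, where the inequality $\lfloor\frac{q-1}{k+1}\rfloor+1\ge\frac q{k+1}$ holds with equality. A secondary point is choosing $\lambda$ small enough for polynomial time while still lying inside the range of Proposition~\ref{prop:MDEJR-approximation}; $\lambda=\frac1{2k^2}$ satisfies both requirements.
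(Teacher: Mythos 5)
Your proof is correct and follows essentially the same route as the paper: bound the number of iterations by the maximum PAV-score divided by $\lambda$, invoke Proposition~\ref{prop:MDEJR-approximation} with $\lambda=\frac1{2k^2}$, round the resulting degree up to an integer, and finish with a division-with-remainder argument showing the ceiling beats $\frac1{k+1}\lceil n/k\rceil$. The only cosmetic difference is that the paper first shows $\lceil c^\ast\rceil=\lceil\frac{n}{k(k+1)}\rceil$ and then compares that to $\frac1{k+1}\lceil\frac nk\rceil$, whereas you bound $\lceil c^\ast\rceil$ directly against $\frac{q}{k+1}$ with $q=\lceil n/k\rceil$; both are the same integer bookkeeping.
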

\begin{proof}
    Firstly, the algorithm runs in polynomial time: the maximum possible PAV-score is $n\cdot\sum_{j=1}^k\frac1j=O(n\log k)$.
    On the other hand, the PAV-score is increased by at least $\frac1{2k^2}$.
    The algorithm terminates in at most $O(nk^2\log k)$ iterations, which runs in polynomial time.

    To prove the approximation guarantee, Proposition~\ref{prop:MDEJR-approximation} shows that the EJR degree of the output committee is at least $\frac{n}{k(k+1)}-\frac{1}{2k(k+1)}$.
    Since the degree is an integer, this is at least $\lceil\frac{n}{k(k+1)}\rceil$: if $\frac{n}{k(k+1)}$ is an integer, we have $\lceil\frac{n}{k(k+1)}-\frac{1}{2k(k+1)}\rceil=\frac{n}{k(k+1)}$ as $\frac{1}{2k(k+1)}<1$; otherwise, since $\frac{n}{k(k+1)}-\lfloor\frac{n}{k(k+1)}\rfloor\geq\frac1{k(k+1)}>\frac1{2k(k+1)}$, the numbers $\frac{n}{k(k+1)}$ and $\frac{n}{k(k+1)}-\frac{1}{2k(k+1)}$ have the same integral part.

    Following a similar analysis as in the proof of Theorem~\ref{thm:MDJR-approximation}, we can prove that
    $$\left\lceil\frac n{k(k+1)}\right\rceil\geq\frac1{k+1}\cdot\left\lceil\frac nk\right\rceil,$$
    which implies the theorem.

    Specifically, the inequality clearly holds if $k$ divides $n$.
    Otherwise, let $n=ak+b$ for integers $a\geq 0$ and $1\leq b\leq k-1$.
    Let $a=s(k+1)+t$ for integers $s\geq0$ and $0\leq t\leq k$.
    We have
    $$\left\lceil\frac n{k(k+1)}\right\rceil=\left\lceil\frac {sk(k+1)+tk+b}{k(k+1)}\right\rceil\geq s+1\geq\frac{s(k+1)+t+1}{k+1}=\frac1{k+1}\cdot (a+1)=\frac1{k+1}\cdot\left\lceil\frac nk\right\rceil.\qedhere$$
\end{proof}

\subsection{Matching Inapproximability Results}
\label{sec:inapproximability}
To complement the positive results mentioned in the previous section, 
we present the following inapproximability results.

\begin{theorem}\label{thm:NPhard_JR}
    It is NP-hard to approximate \MDJR within a factor of $\left(\frac{k}{n}\right)^{1-\epsilon}$ for any $\epsilon>0$.
\end{theorem}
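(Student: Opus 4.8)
The plan is to exhibit a gap-producing reduction from \NEWSAT. Given a \NEWSAT instance $\phi$ with variables $x_1,\dots,x_N$ and clauses $C_1,\dots,C_M$, I would build an MDJR instance in which the committee of size $k$ is essentially forced to encode a truth assignment: create one ``variable voter'' $w_i$ per variable and one ``clause voter'' $u_j$ per clause, introduce for each literal $\ell$ a candidate whose approvers are $w_i$ (where $x_i$ is the variable of $\ell$) together with every clause voter $u_j$ such that $\ell$ satisfies $C_j$, and set $k=N$. Sparsity of $\phi$ is used precisely to guarantee $M\le N$, so that $n=M+N\le 2N$ and $\lceil n/k\rceil=2$ (equivalently $k/n\ge 1/2$); this is the feature that lets one gap family serve every $\epsilon>0$.

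The claim to be proved is that $\phi$ is satisfiable iff the maximum JR degree of the constructed instance equals $\lceil n/k\rceil=2$, and otherwise it equals $1$. For the easy direction, the committee consisting of the $N$ literal-candidates of a satisfying assignment represents every voter (each $w_i$ approves its chosen literal, and each $u_j$ approves a literal of a clause satisfied by the assignment), so every cohesive group is fully represented and the JR degree is the maximum $2$. For the converse, a committee with JR degree $2$ cannot leave any voter unrepresented within a cohesive group; since (after deleting variables occurring in no clause) every voter approves some candidate with at least two approvers, this forces the committee to represent every voter. As $k=N$ and $w_i$ can only be represented by a literal-candidate of $x_i$, the committee must pick exactly one literal per variable, i.e. a consistent assignment, and representing all $u_j$ forces that assignment to satisfy $\phi$. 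When $\phi$ is unsatisfiable no committee attains degree $2$, while a JR committee always exists and certifies degree $\ge 1$, so the maximum is exactly $1$.

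The step I expect to be the main obstacle is this converse direction — ruling out a ``cheating'' committee that abandons the assignment structure and instead spends budget to cover clause voters with an inconsistent set of literals. This is exactly why the budget is pinned to $k=N$ (leaving no slack to buy extra literal-candidates) and why sparsity is needed (bounding $M$ by $N$ and letting us assume no variable is absent from every clause). Granting the gap, the inapproximability follows immediately: a polynomial-time $(k/n)^{1-\epsilon}$-approximation would, on a satisfiable instance, return a committee of JR degree at least $(k/n)^{1-\epsilon}\lceil n/k\rceil \ge (1/2)^{1-\epsilon}\cdot 2 = 2^{\epsilon}>1$, hence at least $2$ by integrality of the degree, and on an unsatisfiable instance a committee of JR degree at most $1$; such an algorithm would decide \NEWSAT in polynomial time, and thus cannot exist unless $\mathrm{P}=\mathrm{NP}$.
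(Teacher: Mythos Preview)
Your overall strategy is sound and is genuinely different from the paper's. The paper builds a large gap ($\bar n+\bar m$ versus $\bar n$) via a fairly intricate construction with voter groups $T_j,S_i,D,D^+$, and then amplifies it by appending one huge clause with $(\bar n+\bar m+1)^{\lceil 1/\epsilon\rceil}$ fresh variables. You instead aim directly for the minimal gap $2$ versus $1$ and let integrality finish the job: since $(k/n)^{1-\epsilon}\cdot 2\ge 2^\epsilon>1$, a hypothetical approximation would have to output degree $2$ on satisfiable instances. This is more elementary, avoids the amplification step entirely, and — since only $1$-cohesive groups matter for the JR degree — does not actually need sparsity at all.

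That said, there is a real error in the proposal: the assertion that ``sparsity of $\phi$ is used precisely to guarantee $M\le N$'' is false. The paper's definition of \NEWSAT (every pair of variables co-occurs in at most one clause) places no upper bound on the number of clauses in terms of the number of variables; indeed, the paper's own reduction from SAT to \NEWSAT produces instances with strictly more clauses than variables (each variable with $t$ occurrences is split into $t$ copies \emph{and} $t$ new cycle clauses are added). Your whole argument hinges on $M\le N$: without it $\lceil n/k\rceil$ can exceed $2$, the gap is no longer $2$ versus $1$, and the integrality step collapses. Moreover, your two remedies (``sparsity gives $M\le N$'' and ``delete variables occurring in no clause'') pull in opposite directions, since deleting variables can only push $M/N$ up.

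The repair is easy but different from what you wrote. Drop \NEWSAT and start from ordinary SAT; then pad by repeatedly introducing two fresh variables $y_1,y_2$ together with the single clause $(y_1\vee y_2)$. Each gadget raises $N-M$ by one, preserves satisfiability, keeps every variable inside some clause, and (crucially) consumes exactly two committee slots in the backward direction — one literal of $y_1$ and one of $y_2$, with at least one positive to cover the new clause voter — so no slack is created for ``cheating'' committees. After enough padding you have $M\le N$ with every variable in a clause, and the remainder of your argument goes through verbatim.
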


\begin{theorem}\label{thm:NPhard_EJR}
    It is NP-hard to approximate \MDEJR within a factor of $\left(\frac{k}{n}\right)^{1-\epsilon}$ for any $\epsilon>0$.
\end{theorem}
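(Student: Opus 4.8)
The plan is to reduce from an NP-hard problem that naturally encodes a "covering" obstruction — the most promising candidate is \textsc{Set Cover} (or equivalently \textsc{Dominating Set}), since the $(k/n)^{1-\epsilon}$ inapproximability bound has the flavor of the classical Set Cover hardness, and the parenthetical ``$W[2]$-hard in $k$'' result later in the paper strongly suggests a Dominating-Set/Set-Cover style construction. Given a \textsc{Set Cover} instance with universe $U$ and sets $S_1,\dots,S_m$, I would build a voting instance whose candidates correspond to the sets $S_j$ (plus possibly padding/dummy candidates), and whose voters are organized so that a single cohesive group of size exactly $\lceil n/k\rceil$ encodes the universe $U$: each element $u\in U$ becomes a block of identical voters inside this group, and a voter-block for $u$ ``approves a winner'' exactly when the committee picks some $S_j$ containing $u$. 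The committee budget $k$ is set to the target cover size. Then a committee covering $U$ with $t$ sets represents all element-blocks and yields a high EJR degree, whereas any committee corresponding to a non-cover leaves at least one element-block unrepresented, forcing the EJR degree down to a small value.

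The key steps, in order: (i) fix the \textsc{Set Cover} instance and the desired inapproximability gap; by the standard gap-amplification for Set Cover one may assume a promise version where either a cover of size $k$ exists or every cover needs size $\ge k\cdot g$ for a gap parameter $g$ that can be taken polynomially large; (ii) translate this into the voting instance, carefully choosing the number of voter-copies per element so that the unique relevant cohesive group has size exactly $\lceil n/k\rceil$ and so that "fraction of element-blocks covered" translates linearly into "EJR degree"; (iii) add auxiliary gadgets (dummy voters approving throwaway candidates, or equal-sized ballots) to ensure that no \emph{other} cohesive group — in particular no $\ell$-cohesive group with $\ell\ge 2$ — imposes a stronger or weaker constraint than the intended one, so that the EJR degree of the instance is governed solely by the Set-Cover gadget; (iv) verify the two directions: a size-$k$ cover gives EJR degree $\lceil n/k\rceil$ (the maximum possible), while the absence of any cover of size $k\cdot g'$ for the appropriate $g'$ forces EJR degree at most some small value $c$, so that distinguishing the two cases — hence approximating \MDEJR\ to within the ratio $c/\lceil n/k\rceil$ — would solve the promise Set-Cover problem; (v) choose the numeric parameters (number of element-copies, number of dummy voters, relation between $m$, $|U|$, $k$, $n$) so that the resulting ratio is $(k/n)^{1-\epsilon}$ for arbitrary $\epsilon>0$, which is where the gap parameter $g$ from Set-Cover hardness gets spent.

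I expect the main obstacle to be step (iii): controlling \emph{all} cohesive groups simultaneously. Unlike \MDJR, the EJR degree must hold against every $\ell$-cohesive group for every $\ell\in[k]$, and a careless gadget could accidentally create a large-$\ell$ cohesive group that is either trivially satisfiable (collapsing the hardness) or impossible to satisfy (making every committee have EJR degree $0$, killing the reduction). The fix will be to keep ballots small — e.g., each "real" voter approves only $O(1)$ candidates, so that forming an $\ell$-cohesive group with $\ell\ge 2$ requires $\ell\cdot n/k$ voters sharing $\ge 2$ candidates, which the construction can preclude except in controlled, easily-satisfied ways — and to make the dummy part of the instance contribute only trivial ($\ell=1$) cohesive groups that any committee satisfies. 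A secondary subtlety is that since the EJR degree is an integer bounded by $\lceil n/k\rceil$, one must make $n/k$ itself large (polynomially in the input) so that the multiplicative gap $(k/n)^{1-\epsilon}$ is meaningful; this is arranged by blowing up the number of voter-copies per element, and it is essentially the same device already used in the paper's other inapproximability arguments. If Set Cover turns out to be awkward for the EJR direction, an alternative is to piggyback directly on the proof of Theorem~\ref{thm:NPhard_JR} (JR version) and argue that the same instance, being built from small ballots with only $\ell=1$ cohesive groups mattering, has its EJR degree equal to its JR degree — this would make Theorem~\ref{thm:NPhard_EJR} an almost immediate corollary of Theorem~\ref{thm:NPhard_JR}.
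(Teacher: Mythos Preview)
Your final alternative --- piggyback on the proof of Theorem~\ref{thm:NPhard_JR} by ensuring the constructed instance has no $\ell$-cohesive group with $\ell\ge 2$, so that the EJR degree coincides with the JR degree --- is exactly what the paper does: Theorems~\ref{thm:NPhard_JR} and~\ref{thm:NPhard_EJR} are proved simultaneously from a single construction, and the EJR statement follows for free once one verifies that only $1$-cohesive groups exist.

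Where your primary route diverges is in the source problem and the gap mechanism. The paper does \emph{not} use gap \textsc{Set Cover}; it reduces from a decision problem, \NEWSAT (a SAT variant in which any two variables share at most one clause), and obtains the multiplicative gap purely from the construction plus a padding step. Concretely: variable/clause blocks of size $\bar m$ are glued together through a small set $D$ of size $\bar n$, so a satisfiable $\phi$ yields JR degree $\bar n+\bar m$ while an unsatisfiable $\phi$ forces some clause-block to be uncovered and drops the degree to at most $\bar n$. The ratio $\bar n/(\bar n+\bar m)$ is then driven down to $(k/n)^{1-\epsilon}$ by padding $\phi$ with one extra clause containing $(\bar n+\bar m+1)^{\lceil 1/\epsilon\rceil}$ fresh variables. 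No promise-gap hardness of the source problem is invoked.

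The reason the paper uses \NEWSAT rather than \textsc{Set Cover} is precisely the obstacle you flagged in step~(iii): with arbitrary \textsc{Set Cover}, two sets $S_i,S_j$ may share many elements, which can create unintended $2$-cohesive groups among the element-block voters. The sparsity condition of \NEWSAT guarantees that any two literal-candidates are jointly approved by at most one clause-block, which (together with size bounds) rules out all $\ell$-cohesive groups with $\ell\ge 2$. Your proposed fix of ``keeping ballots small'' does not by itself achieve this --- what matters is bounding the \emph{pairwise intersections} of candidate supporter sets, and that is exactly what the sparse-SAT restriction delivers. So your instinct was right, but the mechanism you sketch is not quite the one that works; the paper's choice of source problem is doing real work here.
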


\begin{theorem}\label{thm:NPhard_JR-k}
    It is NP-hard to approximate \MDJR within a factor of $\left(\frac{1}{k}\right)^{1-\epsilon}$ for any $\epsilon>0$.
\end{theorem}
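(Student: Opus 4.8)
The plan is to give a polynomial-time reduction from \NEWSAT that produces \MDJR instances with a $1$-versus-$\lceil n/k\rceil$ gap in the optimal JR degree, calibrated so that $n/k\ge k^{1-\epsilon}$; then the same gap re-reads as a factor-$k^{1-\epsilon}$ gap, since the ``yes'' optimum is $\lceil n/k\rceil\ge k^{1-\epsilon}$ while the ``no'' optimum is $1$. This is the point of difference from the reduction behind Theorem~\ref{thm:NPhard_JR}: there the gap is naturally measured against $n/k$ and the produced instances need not satisfy $k^2\le n$, so they do not by themselves yield a factor-$k^{1-\epsilon}$ gap. With the gap in hand, a $(1/k)^{1-\epsilon}$-approximation applied to a ``yes'' instance outputs a committee of JR degree at least $(1/k)^{1-\epsilon}\cdot\lceil n/k\rceil$, which we arrange to be strictly above $1$ by making $n/k$ a hair larger than $k^{1-\epsilon}$, say $k^{1-\epsilon/2}$, so that $(1/k)^{1-\epsilon}\cdot k^{1-\epsilon/2}=k^{\epsilon/2}\to\infty$; such an algorithm would decide \NEWSAT in polynomial time, contradicting NP-hardness.

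Concretely, I would use literal-candidates $\{x_i^+,x_i^-\}$ for the $p$ variables of the \NEWSAT instance, set $k=\Theta(p)$ so that any competitive committee is forced to behave like a truth assignment (picking essentially one literal per variable), and introduce, for each clause, a group of voters approving exactly that clause's literals, blown up to $m$ identical copies. A satisfying assignment then represents every voter in every cohesive group, so the optimal JR degree is $\lceil n/k\rceil$; an unsatisfiable formula forces some clause-group to be (essentially) unrepresented by every size-$k$ committee, capping the optimal JR degree at $1$. Taking the blow-up $m=\Theta(p^{1-\epsilon/2})$ gives $n=\Theta(pm)=\Theta(p^{2-\epsilon/2})$, cohesiveness threshold $\lceil n/k\rceil=\Theta(m)=\Theta(k^{1-\epsilon/2})$, and hence a gap of $\Theta(k^{1-\epsilon/2})\ge k^{1-\epsilon}$ for large $k$. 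The sparsity of \NEWSAT (only $O(p)$ clauses) is exactly what keeps $n=O(pm)$: only when the number of clause-groups is linear in $p$ can one simultaneously make every violated-clause group cohesive — it has $\ge\lceil n/k\rceil=\Theta(m)$ voters, so $m$ copies per clause suffice — and keep $n/k$ as small as $k^{1-\epsilon}$.

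The main obstacle is the joint calibration of the gadget rather than any single step. One must (i) make every violated-clause group cohesive after the blow-up and any extra padding, which is precisely where the sparsity of \NEWSAT is used; (ii) ensure that a JR committee of degree exactly $1$ always exists, so that the ``no'' optimum is exactly $1$ and the approximation ratio is well defined — typically via a cheap ``one-cover'' of helper candidates that lets some committee hit every cohesive group once but not twice absent a genuine satisfying assignment; and (iii) keep $k$ small enough that no size-$k$ committee can cover all clauses without solving the formula, yet large enough that $k^{1-\epsilon}$ remains a meaningful gap. The soundness analysis — showing that in the unsatisfiable case every size-$k$ committee has JR degree at most $1$ — is where most of the care goes, and I would expect to reuse the corresponding argument from the proof of Theorem~\ref{thm:NPhard_JR} almost verbatim, only re-instantiated with the new, $k$-calibrated choice of $m$; the analogous statement for \MDEJR would follow by the same template with $\ell$-cohesive groups in place of cohesive groups.
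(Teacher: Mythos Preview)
Your central premise is mistaken: the instances produced in the proof of Theorem~\ref{thm:NPhard_JR} \emph{do} satisfy $n>k^2$. In that construction $k_{\text{voting}}=\bar m'+1$ and $n_{\text{voting}}/k_{\text{voting}}=\bar n'+\bar m'$; since $\bar n'=\bar n+1\ge 2$, we have $n_{\text{voting}}/k_{\text{voting}}=\bar n'+\bar m'>\bar m'+1=k_{\text{voting}}$, i.e., $n_{\text{voting}}>k_{\text{voting}}^2$. Consequently the paper's proof of Theorem~\ref{thm:NPhard_JR-k} is one line: the inapproximability factor already established there, $\bigl(\tfrac{1}{\bar n'+\bar m'}\bigr)^{1-\epsilon}$, is bounded above by $\bigl(\tfrac{1}{1+\bar m'}\bigr)^{1-\epsilon}=\bigl(\tfrac{1}{k_{\text{voting}}}\bigr)^{1-\epsilon}$. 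No new reduction, no recalibrated blow-up, no $1$-versus-$\lceil n/k\rceil$ gap is needed --- the existing $\bar n'$-versus-$(\bar n'+\bar m')$ gap already does the job simply because $\bar n'+\bar m'>k_{\text{voting}}$.

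Your alternative plan is therefore more work than necessary and, as sketched, has gaps of its own. The assertion that \NEWSAT has ``only $O(p)$ clauses'' is not part of its definition (pairwise co-occurrence at most once permits up to $\Theta(p^2)$ clauses); it happens to hold for the specific instances produced by the paper's SAT-to-\NEWSAT reduction, but you would need to argue this, since your voter count $n=\Theta(pm)$ hinges on it. Your sketch also drops the variable-gadgets (the groups $T_j$ and candidates $t_j,d$ in the paper's construction) that force a size-$k$ committee to behave like a truth assignment; with only literal-candidates and clause-groups, a committee may pick both literals of some variables and neither of others, and your soundness claim that the no-optimum is exactly~$1$ does not follow. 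Finally, engineering helper candidates to pin the no-optimum at exactly~$1$ is unnecessary: you only need it to be strictly below $\rho\cdot\text{OPT}_{\text{yes}}$, and any JR committee already guarantees degree $\ge 1$.
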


\begin{theorem}\label{thm:NPhard_EJR-k}
    It is NP-hard to approximate \MDEJR within a factor of $\left(\frac{1}{k}\right)^{1-\epsilon}$ for any $\epsilon>0$.
\end{theorem}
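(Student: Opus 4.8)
The plan is to reduce from a known W[2]-hard/NP-hard problem — Set Cover (or equivalently Dominating Set) — and, as the excerpt's earlier inapproximability claims already foreshadow for \MDJR, to reuse essentially the same gadget to force a sharp dichotomy between ``almost all cohesive voters represented'' and ``only one cohesive voter represented.'' Concretely, given a Set Cover instance with universe $U$ of size $m$, a family $\mathcal{S}$ of sets, and target $t$, I would build a voting instance in which each element of $U$ becomes a ``core'' voter, each set becomes a candidate that is approved exactly by the core voters it covers, and $k$ is set to $t$ (plus a small additive slack from dummy gadgets if needed to control $n/k$). A generic ``anchor'' candidate approved by every core voter guarantees JR trivially, but to make the \emph{degree} sensitive to the cover I would instead make the core voters form a single cohesive group whose common candidate is a non-elected ``witness'' candidate, so that a core voter is counted as represented iff some chosen set covers it.

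The key steps, in order: (1) formalize the reduction so that there is exactly one (or one critical) cohesive group $V$ consisting of the $m$ core voters, with $|V| = n/k$ exactly, so the JR degree equals the number of core voters covered by the $k = t$ chosen set-candidates; (2) pad with copies so that a Set Cover of size $t$ yields JR degree $m$ (all core voters represented) while any committee corresponds to choosing $t$ sets, whose uncovered elements are cohesive-but-unrepresented, giving degree $m - (\text{number uncovered})$; (3) amplify the gap — take $\rho$ disjoint ``blocked'' copies of the core-voter gadget glued through shared candidates, or blow up the universe polynomially, so that a YES instance gives degree $\Theta(n/k)$ and a NO instance gives degree $1$, yielding the $(k/n)^{1-\epsilon}$ hardness as in Theorems~\ref{thm:NPhard_JR}--\ref{thm:NPhard_EJR}; (4) for the $(1/k)^{1-\epsilon}$ bound (Theorems~\ref{thm:NPhard_JR-k}--\ref{thm:NPhard_EJR-k}), rescale so $n \approx k^2$, making $k/n \approx 1/k$, and verify the gap survives; (5) lift the construction to \MDEJR by ensuring all the critical cohesive groups are $1$-cohesive (so ``represented'' means ``approves $\geq 1$ winner,'' identical to the JR case) — this is why the same bounds hold for \MDEJR — while checking no spurious high-$\ell$ cohesive group is created that would make the EJR degree trivially small on the YES side.

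I would then package all four theorems together: the same reduction, instantiated with two different choices of padding parameters, simultaneously gives the $(k/n)^{1-\epsilon}$ and $(1/k)^{1-\epsilon}$ inapproximability for both \MDJR and \MDEJR, with the gap-amplification step (taking many disjoint copies, or a polynomial blow-up of the Set Cover instance, to push the separation from a constant factor to $(\cdot)^{1-\epsilon}$) being the workhorse. The standard trick is: if approximating within $\alpha$ were possible in polynomial time, then on $r$ independent copies the ratio between YES-degree and NO-degree would be $\alpha^{?}$-bounded in a way that contradicts the exact-decision hardness of Set Cover once $r$ is chosen as a suitable polynomial in the instance size and $1/\epsilon$.

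The main obstacle I anticipate is controlling side-effects of the cohesiveness structure: I must ensure the only cohesive groups with few represented voters are exactly the ``core'' groups encoding uncovered elements, and in particular that (a) no unintended cohesive group arises among dummy/padding voters that would cap the degree on the YES side, and (b) for \MDEJR, no $\ell$-cohesive group with $\ell \geq 2$ is accidentally created that is hard to satisfy regardless of the cover — this would break the clean equivalence with \MDJR. Managing this cleanly will likely require keeping approval sets small (so intersections of large voter subsets are forced to be tiny) and making all padding voters approve private candidates, so the reduction's only ``real'' constraints come from the Set Cover part.
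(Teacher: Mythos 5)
Your high-level strategy matches the paper's in outline (reduce from a covering-type problem, create a dichotomy between ``all core voters covered'' and ``some core group left under-covered,'' pad to amplify the gap, and make the EJR case collapse to the JR case by ensuring only $1$-cohesive groups exist). However, there are two genuine gaps. The first is the one you yourself flag as an ``anticipated obstacle'' but do not resolve, and for \MDEJR it is the whole ballgame: with a \emph{general} Set Cover instance, two sets $S_i,S_j$ can overlap in arbitrarily many elements, so the corresponding candidates $c_i,c_j$ are jointly approved by a large set of core voters, which can form a $2$-cohesive group. That would make the EJR degree small even on YES instances and destroy the reduction. The paper's fix is structural, not cosmetic: it does \emph{not} reduce from Set Cover but from a sparsified SAT variant (any two variables co-occur in at most one clause, obtained by the standard Tovey-style splitting with cyclic implication clauses), which guarantees that any two ``literal'' candidates are jointly approved by at most $\bar m$ voters --- strictly below the $2n/k$ threshold --- so no $\ell$-cohesive group with $\ell\geq 2$ exists. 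Your proposal would need an analogous ``bounded pairwise intersection'' version of Set Cover together with a proof that it remains NP-hard; without that, the claim that the EJR degree equals the JR degree is unsupported.

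The second gap is in the amplification step. The (E)JR degree is a \emph{minimum} over cohesive groups, not a sum or product, so the ``$r$ independent copies, ratio $\alpha^{?}$'' product-style argument you describe as the standard trick does not apply. What actually works (and what the paper does) is purely additive padding: append a single new clause with $(\bar n+\bar m+1)^{\lceil 1/\epsilon\rceil}$ fresh variables so that $\bar m'$ dwarfs $\bar n'$. The NO-side degree stays at $\bar n'$ (not $1$, as you assert --- the $D$-group of $\bar n$ voters is always covered by $d$, so degree $1$ is not achievable and not needed), while the YES-side degree is $n/k=\bar n'+\bar m'$, and the ratio $\bar n'/(\bar n'+\bar m')$ is then below $(1/(\bar n'+\bar m'))^{1-\epsilon}=(k/n)^{1-\epsilon}<(1/k)^{1-\epsilon}$ since $k=\bar m'+1<\bar n'+\bar m'$. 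Your step (4) of ``rescaling so $n\approx k^2$'' is unnecessary as a separate step; the construction already has $n/k\approx k$ and the $(1/k)^{1-\epsilon}$ bound follows from the one-line inequality $\bar n'+\bar m'>1+\bar m'=k$.
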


We will simultaneously prove these four theorems by constructing a hard ballot instance that is used for all of them.
We will make sure the instance we constructed has no $\ell$-cohesive group with $\ell>1$.
Notice that the JR degree and the EJR degree for any committee are always the same for instances with only $1$-cohesive groups.
In addition, we have $n\approx k^2$ in our construction, so the factor $k/n$ is approximately $1/k$.

Before we prove the theorems, we first introduce a NP-hard problem: \NEWSAT problem. One can find variations of SAT problem and a similar argument in \citet{tovey1984simplified}.
\begin{definition}[\NEWSAT]
    Given a CNF formula $\phi$ that, for any pair of variables $x$ and $y$, at most one clause contains both $x$ (or $\neg x$) and $y$ (or $\neg y$), decide if there is a value assignment to the variables to make $\phi$ true.
\end{definition}

To see its NP-hardness, it can be reduced from the SAT problem. Start with any SAT instance. Without loss of generality, suppose each variable, $x$ or $\neg x$, appears in each clause at most once. For each variable $x$ such that $x$ or $\neg x$ is contained in more than two clauses, we perform the following procedure. Suppose $x$ appears in $k$ clauses. Create $k$ new variables $x_1,\dots,x_k$ and replace the $i$th occurrence of $x$ with $x_i$ (and $\neg x$ is replaced by $\neg x_i$, respectively) for each $i=1,\dots,k$. Add the clause $(x_i\vee \neg x_{i+1})$ for $i=1,\dots,k-1$ and the clause $(x_k\vee \neg x_1)$. Note that, in the new instance, variable $x_i$ and $y_j$ appear in a clause only when the $i$th occurrence of $x$ and the $j$th occurrence of $y$ in $\phi$ are in the same clause, so the new instance satisfies the requirement of the \NEWSAT problem.

In the new instance, the clause $(x_i\vee \neg x_{i+1})$ implies that if $x_i$ is false, $x_{i+1}$ must be false as well. The cyclic structure of the clauses therefore forces the $x_i$ to be either all true or all false, so the new instance is satisfiable if the original one is. Moreover, the transformation requires polynomial time.

Now we are ready to prove our theorems. 

\begin{figure}
    \centering
    \begin{tikzpicture}[scale=0.4]
    \node[left] at (-2.5,0) {Voters};
    \node[left] at (-2.5,-8) {Candidates};

    \foreach \x in {1,2,3,4,5,6,7} {
        \draw[thick] (5*\x-5,0) ellipse (2 and 1);
        \filldraw[black] (5*\x-6,0) circle (0.15);

        \filldraw[black] (5*\x-4,0) circle (0.15);
    }

    \foreach \x in {1,2,3,4,5,7} {

        \filldraw[black] (5*\x-5,0) circle (0.15);

    }
    
    \foreach \x in {1,2,3} {
        \node[above] at (5*\x-5,1.2) {$T_\x$};
        \draw[thick] (5*\x-5,-1) -- (5*\x-5,-7);
        \draw[thick] (5*\x-5,-8) ellipse (2 and 1);
        \filldraw[black] (5*\x-5.8,-8) circle (0.15);
        \filldraw[black] (5*\x-4.2,-8) circle (0.15);
    }

    \foreach \x in {1,2,3,4,5,6} {
        \filldraw[black] (3*\x+12,-8) circle (0.15);
    }

    \node[above] at (15,1.2) {$S_1$};
    \node[above] at (20,1.2) {$S_2$};
    \node[above] at (25,1.2) {$D$};    
    \node[above] at (30,1.2) {$D^+$};

    \node[above] at (-0.8,-10.5) {$c_1$};
    \node[above] at (-0.8,-11.2) {\small $x_1$};
    \node[above] at (0.8,-10.5) {$c_2$};
    \node[above] at (0.8,-11.2) {\small $\neg x_1$};
    \node[above] at (4.2,-10.5) {$c_3$};
    \node[above] at (4.2,-11.2) {\small $x_2$};
    \node[above] at (5.8,-10.5) {$c_4$};
    \node[above] at (5.8,-11.2) {\small $\neg x_2$};
    \node[above] at (9.2,-10.5) {$c_5$};
    \node[above] at (9.2,-11.2) {\small $x_3$};
    \node[above] at (10.8,-10.5) {$c_6$};
    \node[above] at (10.8,-11.2) {\small $\neg x_3$};
    \node[above] at (15,-10.5) {$t_1$};
    \node[above] at (18,-10.5) {$t_2$};
    \node[above] at (21,-10.5) {$t_3$};
    \node[above] at (24,-10.5) {$s_1$};
    \node[above] at (27,-10.5) {$s_2$};
    \node[above] at (30,-10.5) {$d$};
    \node[above] at (15, -13.2) {$\phi=(x_1 \vee \neg x_3)\wedge(\neg x_1 \vee x_2)$, $\bar{m}=3$,  $\bar{n}=2$, $k=\bar{m}+1=4$.};
    \draw[thick] (0,-1) -- (15,-8);
    \draw[thick] (5,-1) -- (18,-8);
    \draw[thick] (10,-1) -- (21,-8);
    
    \draw[thick] (15,-1) -- (-0.8,-8);
    \draw[thick] (15,-1) -- (10.8,-8);
    \draw[thick] (15,-1) -- (24,-8);    

    \draw[thick] (20,-1) -- (0.8,-8);
    \draw[thick] (20,-1) -- (4.2,-8);
    \draw[thick] (20,-1) -- (27,-8);    

    \draw[thick] (25,-1) -- (15,-8);
    \draw[thick] (25,-1) -- (18,-8);
    \draw[thick] (25,-1) -- (21,-8); 
    \draw[thick] (25,-1) -- (24,-8);
    \draw[thick] (25,-1) -- (27,-8); 
    \draw[thick] (25,-1) -- (30,-8);  

    \draw[thick] (30,-1) -- (30,-8); 
\end{tikzpicture}
\caption{An example of the construction with $\phi=(x_1\vee\neg x_3)\wedge(\neg x_1\vee x_2)$. Edges in the graph represent approvals. An edge connecting between a group of voters and a group of candidates indicates that every voter in the voter group approves every candidate in the candidate group.}
\label{fig:reduction}
\end{figure}
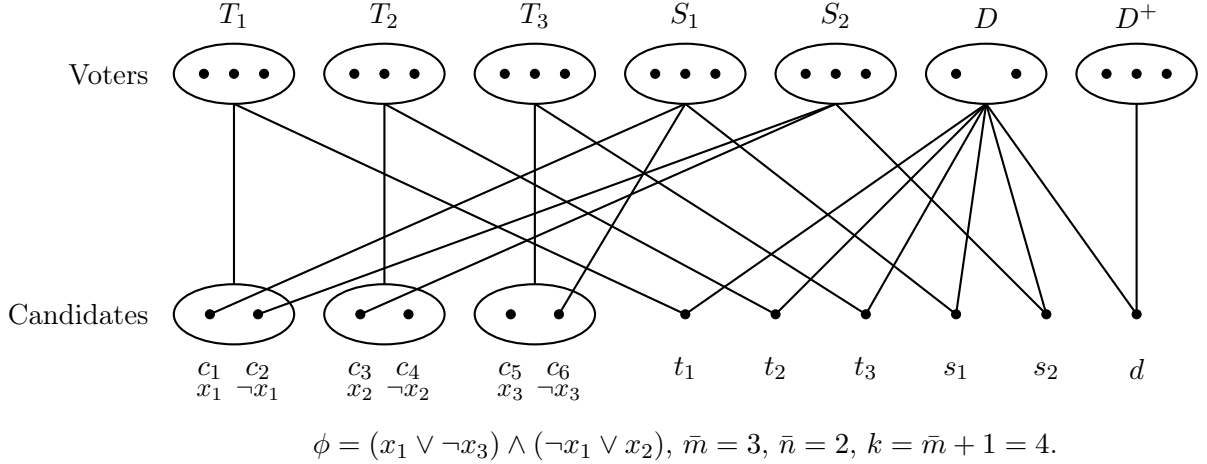

\begin{proof}[Proof of Theorem~\ref{thm:NPhard_JR} and \ref{thm:NPhard_EJR}]\label{proof_of_thm1/2}
    We reduce from \NEWSAT problem. Given any \NEWSAT instance $\phi$, suppose there are $\bar{n}$ clauses and $\bar{m}$ variables (say $x_1,\dots,x_{\bar{m}}$). We consider an ABC voting instance with $3\bar{m}+\bar{n}+1$ candidates $\{c_1,\dots,c_{2\bar{m}}$, $s_1,\dots,s_{\bar{n}}$, $t_1,\dots, t_{\bar{m}}, d\}$ and $\bar{n}\bar{m}+\bar{m}^2+\bar{n}+\bar{m}$ voters. We want to select a committee of size $\bar{m}+1$. 
    Hence, we care about the cohesive group of size $\frac{\bar{n}\bar{m}+\bar{m}^2+\bar{n}+\bar{m}}{\bar{m}+1}=\bar{n}+\bar{m}$. 
    First, for each variable $x_j$ and its negation $\neg x_j$, we create two corresponding candidates $c_{2j-1}$ and $c_{2j}$ and a group $T_j$ of $\bar{m}$ voters who approve $c_{2j-1}$ and $c_{2j}$. For the $i$th clause, we create a group $S_i$ of $\bar{m}$ voters. All voters in group $S_i$ approve $c_{2j-1}$ if $x_j$ occurs in the $i$th clause and $c_{2j}$ if $\neg x_j$ occurs in the $i$th clause. All voters in group $S_i$ approve $s_i$ and voters in group $T_i$ approve $t_i$ additionally. We create a set $D$ of $\bar{n}$ voters who approve $s_1,\dots,s_{\bar{n}},t_1,\dots,t_{\bar{m}},d$. Hence, for each $i\in [\bar{n}]$, $S_i\cup D$ forms a $1$-cohesive group, and for each $j\in [\bar{m}]$, $T_j\cup D$ forms a $1$-cohesive group. Moreover, we create a set $D^+$ of $\bar{m}$ voters who approve $d$. Hence, $D\cup D^+$ forms a $1$-cohesive group.
    An example of our construction is shown in Fig.~\ref{fig:reduction}.
    
    Notably, there is no $2$-cohesive group. First, candidate $s_i$ for $i\in [\bar{n}]$ or $t_j$ for $j\in [\bar{m}]$ or $d$ has only $\bar{n}+\bar{m}$ voters approving them as constructed above. For candidates $c_1,\dots,c_{2\bar{m}}$, we will show that no $2(\bar{n}+\bar{m})$ voters have two common approved candidates. 
    For any two candidates $t_{2i-1},t_{2i}$ in $\{c_1,\dots,c_{2\bar{m}}\}$ that correspond to a variable and its negation, the set of voters approving both candidates is exactly $T_i$ (notice that we can assume without loss of generality that $x_i$ and $\neg x_i$ do not appear in the same clause, for otherwise, we can safely remove both literals).
    We have $|T_i|=\bar{m}<2(\bar{n}+\bar{m})$.
    For any two candidates in $\{c_1,\dots,c_{2\bar{m}}\}$ that correspond to different variables, by our sparsity assumption of $\phi$, the set of voters approving both candidates is at most some $S_j$ (representing a clause that contains both variables).
    We have $|S_j|=\bar{m}<2(\bar{n}+\bar{m})$.
    Thus, there is no $2$-cohesive group or $\ell$-cohesive group for $\ell>1$. In this case, the EJR degree is equal to the JR degree.
    Thereafter, we will analyze the JR degree and \MDJR, and the analysis applies to \MDEJR as well.

    If there is a value assignment to the variables to make $\phi$ true, then \MDJR will achieve a JR degree of $\bar{n}+\bar{m}$. To see this, for each $j\in [\bar{m}]$, we select $c_{2j-1}$ as a winner if $x_j$ is assigned true, or we select $c_{2j}$. Then we select $d$ as the winner. We can verify that each voter in $\{S_i\}_{i\in [\bar{n}]}\cup\{T_j\}_{j\in [\bar{m}]}$ approves at least one winner. In addition, $d$ is approved by all voters in $D$ and $D^+$. Hence, all voters approve at least one winner, and the JR degree equals the size of a $1$-cohesive group, which is $\bar{n}+\bar{m}$.
    
    If there does not exist a satisfying assignment to $\phi$, then \MDJR will achieve the JR degree of at most $\bar{n}$. To see this, we prove it by contradiction. Assume that \MDJR can achieve the JR degree of larger than $\bar{n}$ by the winner committee $W$ with $|W|=k=\bar{m}+1$. If $d\notin W$, no voter in $D^+$ can be covered since they only approve $d$. Hence, for the cohesive group $D\cup D^+$, at most $\bar{n}$ voters can be covered, which leads to a contradiction. Thus, $d\in W$. Let $W'=W\setminus\{d\}$, and we have $|W'|=\bar{m}$. If voters in $T_j$ are not covered for some $j\in [\bar{m}]$, at most $\bar{n}$ voters can be covered in the cohesive group $D\cup T_j$, leading to a contradiction. Thus, all voters in group $T_j$ must be covered for each $j\in [\bar{m}]$, indicating that exactly one of the $3$ candidates $\{c_{2j-1},c_{2j},t_j\}$ for each $j\in[\bar{m}]$ is selected (at least one candidate in each three-candidates group must be selected, we have $\bar{m}$ groups, and we have $|W'|=\bar{m}$). Next, we will show that voters in at least one group among $S_1,\dots, S_{\bar{n}}$ cannot be covered. 
    Suppose this is not the case.
    For every $t_j\in W'$, we can use either $c_{2j-1}$ or $c_{2j}$ to replace $t_j$ since $t_j$ only covers group $T_j$ and each of $c_{2j-1}$ and $c_{2j}$ covers at least $T_j$. Hence, we can find one candidate in $\{c_{2i-1},c_{2i}\}$ for all $i\in [\bar{m}]$ to cover all groups, implying that there exists a value assignment to the variables ($x_i$ is assigned true if $c_{2i-1}$ is selected, or false otherwise) to make $\phi$ true, leading to a contradiction. Therefore, at least one group among $S_1,\dots, S_{\bar{n}}$ cannot be covered. Without loss of generality, we assume that it is $S_i$. Then, for the cohesive group $S_i\cup D$, at most $\bar{n}$ voters can be covered, leading to a contradiction.

    From the proof of NP-hardness, we have that \MDJR cannot be approximated in polynomial time to within a factor of $\frac{\bar{n}}{\bar{n}+\bar{m}}$, where $\bar{n}$ is the number of the clauses and $\bar{m}$ is the number of the variables. 
    Now, instead of reducing from \NEWSAT problems directly, we add an intermediate reduction. Given a \NEWSAT instance $\phi$ with $\bar{n}$ clauses and $\bar{m}$ different variables, we construct another \NEWSAT instance $\phi'$ with $\bar{n}'=\bar{n}+1$ clauses and $\bar{m}'=\bar{m}+(\bar{n}+\bar{m}+1)^{\lceil\frac{1}{\epsilon}\rceil}$ different variables such that a new clause is added with $(\bar{n}+\bar{m}+1)^{\lceil\frac{1}{\epsilon}\rceil}$ new variables. Obviously, $\phi$ and $\phi'$ have the same satisfiability.

    Now, we use $\phi'$ instead of $\phi$ to construct the ABC voting instance.
    By our previous analysis, \MDJR cannot be approximated in polynomial time to within a factor of $\frac{\bar{n}'}{\bar{n}'+\bar{m}'}$ where $\bar{n}'=\bar{n}+1, \bar{m}'=\bar{m}+(\bar{n}+\bar{m}+1)^{\lceil\frac{1}{\epsilon}\rceil}$. Recall that the number of voters and the committee size are  $n_{\text{voting}}=\bar{n}'\bar{m}'+\bar{m}'^2+\bar{n}'+\bar{m}'$ and $k_{\text{voting}}=\bar{m}'+1$, respectively, which can be reformulated as $\frac{k_{\text{voting}}}{n_{\text{voting}}}=\frac{1}{\bar{n}'+\bar{m}'}$.
    Since
    \begin{equation*}
        (\bar{n}'+\bar{m}')^{\epsilon}=\left(\bar{n}'+\bar{m}+(\bar{n}'+\bar{m})^{\lceil\frac{1}{\epsilon}\rceil}\right)^{\epsilon}>\bar{n}'+\bar{m}>\bar{n}',
    \end{equation*}
    \MDJR cannot be approximated in polynomial time to within a factor of
    \begin{equation*}
        \frac{\bar{n}'}{\bar{n}'+\bar{m}'}\le \frac{(\bar{n}'+\bar{m}')^{\epsilon}}{\bar{n}'+\bar{m}'} = \left(\frac{1}{\bar{n}'+\bar{m}'}\right)^{1-\epsilon} \!\!\!=\! \left(\frac{k_{\text{voting}}}{n_{\text{voting}}}\right)^{1-\epsilon},
    \end{equation*}
    where $\epsilon>0$.

    As we have mentioned, the same analysis holds for \MDEJR, as there is no $\ell$-cohesive group with $\ell\geq 2$ in our construction.
\end{proof}

\begin{proof}[Proof of Theorem~\ref{thm:NPhard_JR-k} and \ref{thm:NPhard_EJR-k}]
    Both theorems can be proved with only minor modifications to the last step in the proof of Theorem~\ref{thm:NPhard_JR} and \ref{thm:NPhard_EJR} above:
    \begin{equation*}
        \left(\frac{1}{\bar{n}' + \bar{m}'}\right)^{1-\epsilon} < \left(\frac{1}{1 + \bar{m}'}\right)^{1-\epsilon} = \left(\frac{1}{k_{\text{voting}}}\right)^{1-\epsilon}.\qedhere
    \end{equation*}
\end{proof}

\section{Parameterized Complexity of {\tt\textbf{MDJR}}/{\tt\textbf{MDEJR}}} \label{sec: fpt}
The parameterized approach is often used to address problems that are hard to solve in their general form but become more tractable or have improved algorithms when considering specific parameter values. In most scenarios, the committee size $k$ is much smaller than the number of voters. Hence, would it be helpful if we fixed the parameter $k$?

In Appendix~\ref{append:parameterized}, we provide a brief introduction to the parameterized complexity theory which includes basic definitions that are necessary for the results in the remaining part of our paper. The readers familiar with parameterized complexity can move on to Section~\ref{sect:W2hard}.

\subsection{W[2]-Hardness with Parameter $k$.}
\label{sect:W2hard}
We first show that both \MDJR and \MDEJR are intractable when the committee size $k$ is specified as a parameter.

\begin{theorem}\label{thm:W2-JR}
    \MDJR is W[2]-hard parameterized by $k$.
\end{theorem}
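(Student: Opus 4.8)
The plan is to give a parameterized (FPT) reduction from \textsc{Set Cover} parameterized by the size $k$ of the desired cover --- equivalently \textsc{Dominating Set} parameterized by solution size --- which is the canonical $W[2]$-complete problem. From an instance $(\mathcal{U},\mathcal{S},k)$ with $|\mathcal{U}|=n$ and $\mathcal{S}=\{S_1,\dots,S_m\}$ I would build an ABC-voting instance whose committee size is $k_{\mathrm{voting}}=k+1$ (a function of the parameter alone), so that a valid FPT algorithm for \MDJR would decide, in FPT time, whether $(\mathcal{U},\mathcal{S})$ admits a cover of size at most $k$. I may assume $\bigcup\mathcal{S}=\mathcal{U}$ and $k<n$, since otherwise the \textsc{Set Cover} instance is decided trivially.

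The gadget: introduce one candidate $c_j$ for every set $S_j$ and $n+1$ ``dummy'' candidates $d_1,\dots,d_{n+1}$. For each element $i\in\mathcal{U}$ create a block $N_i$ of exactly $k$ voters, all with approval set $\{c_j:i\in S_j\}\cup\{d_i\}$; add a block $X$ of $n-k$ voters approving $\{d_1,\dots,d_{n+1}\}$ and a block $X^+$ of $k$ voters approving only $d_{n+1}$. Then the number of voters is $nk+(n-k)+k=n(k+1)$, so with $k_{\mathrm{voting}}=k+1$ every $1$-cohesive group has size exactly $n$ and the maximum possible JR degree is $n$. The cohesive groups I would use in the analysis are $N_i\cup X$ (sharing $d_i$) for each $i\in[n]$ and $X\cup X^+$ (sharing $d_{n+1}$), each of size exactly $n$; since the JR degree only constrains $1$-cohesive groups, any higher-order cohesive groups created are irrelevant here.

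For completeness: if $\mathcal{C}$ is a cover with $|\mathcal{C}|\le k$, the committee $\{c_j:S_j\in\mathcal{C}\}\cup\{d_{n+1}\}$, padded arbitrarily to size $k+1$, represents every voter, so the JR degree is $n$. Conversely, suppose a committee $W$ achieves JR degree strictly more than $n-k$. Then $d_{n+1}\in W$, for otherwise all $k$ voters of $X^+$ are unrepresented and the group $X\cup X^+$ has at most $n-k$ represented voters. The remaining $k$ candidates $W\setminus\{d_{n+1}\}$ must represent at least one (hence, since the $N_i$-voters are identical, all) voter of every block $N_i$, for otherwise $N_i\cup X$ has at most $n-k$ represented voters; thus $W\setminus\{d_{n+1}\}$ intersects $\{c_j:i\in S_j\}\cup\{d_i\}$ for every $i$. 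Replacing each chosen dummy $d_i$ by some $c_j$ with $i\in S_j$ (which exists by $\bigcup\mathcal{S}=\mathcal{U}$) turns $W\setminus\{d_{n+1}\}$ into a collection of at most $k$ set-candidates covering $\mathcal{U}$. Hence the maximum JR degree is $n$ exactly when a cover of size $\le k$ exists and at most $n-k$ otherwise, completing the reduction.

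Most of this is bookkeeping; the delicate point --- and the part I expect to require the most care --- is choosing the sizes of the blocks $X$, $X^+$ and the per-element dummies $d_i$ so that simultaneously (a) every $1$-cohesive group has size exactly $n$, (b) the single candidate $d_{n+1}$ handles all $X$-side cohesive constraints while consuming only one of the $k+1$ committee slots, and (c) the dummy-to-set-candidate swap in the converse direction goes through without damaging coverage of any element. These interlocking choices are what create the gap between JR degree $n$ and JR degree $\le n-k$. The analogous $W[2]$-hardness for \MDEJR would additionally need to rule out or handle $\ell$-cohesive groups with $\ell\ge 2$, which the construction above does not automatically guarantee.
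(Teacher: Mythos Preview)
Your reduction is correct and is a valid FPT reduction from \textsc{Set Cover}; the size-bookkeeping and the dummy-swap step in the converse direction both go through as you describe. However, your construction is considerably more elaborate than the paper's. The paper sets $k'=k$ directly, takes only the $m$ set-candidates (no dummies at all), and uses voters $N=N_1\cup N_2$ where $N_1=\{1,\dots,n\}$ copies the universe with voter $i$ approving $\{c_j:i\in S_j\}$, while $N_2$ consists of $n$ ``universal'' voters approving \emph{every} candidate. Then any uncovered element $i\in N_1$ together with all of $N_2$ forms a $1$-cohesive group in which $i$ is unrepresented, so the JR degree attains its maximum $|N|/k'$ if and only if a size-$k$ cover exists. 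Your gadget buys a quantitative gap ($n$ versus at most $n-k$), which is exactly the structure of the paper's NP-hardness/inapproximability reduction in Section~\ref{sec:inapproximability}; for bare $W[2]$-hardness the paper's universal-voter trick is shorter, but your version would be the right starting point if one wanted a parameterized inapproximability statement.
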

\begin{proof}
    We present a reduction from the \emph{set cover problem}, a canonical W[2]-complete problem~\cite{downey1995fixed}.
    Given a set cover instance $(\mathcal{U}=\{1,\ldots,n\},\mathcal{S}=\{S_1,\ldots,S_m\},k)$, we construct an \MDJR instance $(N, C, \mathbf{A}, k')$ as follows.

    The set of voters $N$ is given by $N=N_1\cup N_2$ where $N_1=\{1,\ldots,n\}$ corresponds to $\mathcal{U}$ in the set cover instance and $N_2=\{n+1,\ldots,2n\}$ is the set of $n$ additional voters.
    The set of candidates $C=\{c_1,\ldots,c_m\}$ corresponds to $\mathcal{S}$.
    The profile $\mathbf{A}$ is defined as follows:
    \begin{itemize}
        \item for voters in $N_1$, a voter $i$ approves a candidate $c_j$ if and only if $i\in S_j$ in the set cover instance;
        \item for every voter in $N_2$, they approve all candidates.
    \end{itemize}
    Finally, the committee size $k'$ is set to $k'=k$.
    We assume $k\geq 2$ without loss of generality.

    If the set cover instance is a yes instance, by selecting the $k$ candidates corresponding to the $k$ subsets that cover $\mathcal{U}$, every voter approves at least one candidate in the winner committee. The JR degree reaches the maximum possible value $|N|/k'$.

    If the set cover instance is a no instance, for every winner committee with $k'=k$ candidates, there exists a voter in $N_1$ that approves no candidate in the winner committee. On the other hand, every voter in $N_1$ is in at least one cohesive group, as this voter and the $n$ voters in $N_2$, with a total of $n+1>|N|/k'$ voters, approve one common candidate.
    This implies the JR degree cannot reach the maximum possible value $|N|/k'$.
\end{proof}


The hardness of \MDEJR is proved differently.
In the previous section, we prove the hardness of \MDEJR in the same way as \MDJR by making sure only $1$-cohesive groups exist.
However, we fail to make this technique work for proving the following theorem.
Instead, the hard instance constructed here contains $\ell$-cohesive groups for $\ell>1$.

\begin{theorem}\label{thm:W2-EJR}
    \MDEJR is W[2]-hard parameterized by $k$.
\end{theorem}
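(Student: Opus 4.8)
The plan is to reduce from the \emph{set cover problem} again, which is W[2]-complete parameterized by the solution size $k$, but this time I must cope with the fact that the natural ``padding'' voters who approve everything will create $\ell$-cohesive groups for $\ell>1$, and such groups can be satisfied too easily (a voter approving all $k'$ winners trivially approves $\ell$ winners for every $\ell$). So the construction needs to be arranged so that the EJR obligations for the high-$\ell$ cohesive groups are either met automatically by the padding voters or are controlled, while the genuine difficulty is still forced into covering $\mathcal{U}$. Concretely, given a set cover instance $(\mathcal{U}=\{1,\dots,n\},\mathcal{S}=\{S_1,\dots,S_m\},k)$, I would take candidates $c_1,\dots,c_m$ corresponding to the sets, add a block of ``dummy'' candidates, and take voters $N_1=\{1,\dots,n\}$ (voter $i$ approving $c_j$ iff $i\in S_j$) together with a large block $N_2$ of padding voters who approve all of $c_1,\dots,c_m$ (and possibly the dummy candidates). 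The committee size $k'$ is set equal to $k$, or to $k$ plus a constant depending on the number of dummy candidates; the point of the dummies is to let $N_2$ be made ``large enough'' that the $1$-cohesive group formed by any single $N_1$-voter together with $N_2$ has size exceeding $n_{\text{voting}}/k'$, exactly as in the proof of Theorem~\ref{thm:W2-JR}.

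The core of the argument: on a yes instance, picking the $k$ set-candidates of a cover (plus all dummies, if used) makes every voter in $N_1$ approve at least one winner, and every padding voter approves all of $c_1,\dots,c_m$, which already gives them all $k'$ (or at least $\ell$) winners for every relevant $\ell$; so the $1$-cohesive obligation is fully met for all voters, and the $\ell$-cohesive obligation for $\ell\ge2$ is also fully met because every $\ell$-cohesive group here will turn out to consist only of padding voters (or of padding voters plus already-fully-satisfied voters). Hence the EJR degree hits its maximum $\lceil n_{\text{voting}}/k'\rceil$. On a no instance, any committee of size $k'$ leaves some voter $i^\ast\in N_1$ approving no winner; since $\{i^\ast\}\cup N_2$ is a $1$-cohesive group of size $>n_{\text{voting}}/k'$, its number of represented voters is at most $|N_2|<\lceil n_{\text{voting}}/k'\rceil$, so the EJR degree is strictly below the maximum. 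Thus deciding whether the maximum EJR degree equals $\lceil n_{\text{voting}}/k'\rceil$ solves set cover, and since $k'=k+O(1)$ the reduction is a parameterized reduction, giving W[2]-hardness.

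The main obstacle — and the reason the proof ``is done differently'' as the paper says — is controlling the $\ell$-cohesive groups for $\ell\ge2$. I need to verify that every $\ell$-cohesive group with $\ell\ge2$ in the constructed instance either contains only padding voters, or at least contains only voters that the yes-instance committee fully satisfies; otherwise the yes direction could fail to reach the maximum EJR degree, or the no direction could be satisfied for a spurious reason. This requires checking: (i) which candidates can be jointly approved by $\ell\cdot n_{\text{voting}}/k'$ voters for $\ell\ge2$ — essentially only the candidates approved by all of $N_2$, since an $N_1$-voter's approval set is small (bounded by the frequency of an element, which I can make small by a standard pre-reduction ensuring bounded element frequency, or by taking $|N_2|$ so large that $2n_{\text{voting}}/k' > n+|N_2$-minus-slack$|$ forces the group to be dominated by $N_2$); and (ii) that for such a group, selecting all of $c_1,\dots,c_m$-in-the-cover plus dummies gives each member $\ge\ell$ winners. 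Getting the arithmetic on $|N_2|$, the number of dummies, and $k'$ simultaneously consistent so that (a) single $N_1$-voter plus $N_2$ is $1$-cohesive, (b) no ``mixed'' group is $\ge2$-cohesive, and (c) $k'=k+O(1)$, is the delicate bookkeeping; everything else is a routine transfer of the Theorem~\ref{thm:W2-JR} argument.
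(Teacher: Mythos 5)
Your overall strategy---reduce from set cover with $k'=k$ and use a padding block $N_2$ so large that every $\ell$-cohesive group with $\ell\ge 2$ is dominated by padding voters---is genuinely different from the paper's proof and can be made to work, but two of the steps you wrote down fail as stated and one of your design options is a trap. First, in the no direction you assert that the group $\{i^\ast\}\cup N_2$ has at most $|N_2|<\lceil n_{\text{voting}}/k'\rceil$ represented voters; that inequality is false precisely in the regime you need (you want $|N_2|$ of order $kn$, so $|N_2|\gg n_{\text{voting}}/k'$), and the full group $\{i^\ast\}\cup N_2$ therefore does \emph{not} witness a sub-maximum degree. You must instead pass to a $1$-cohesive subgroup of size exactly $\lceil n_{\text{voting}}/k'\rceil$ containing $i^\ast$, which has only $\lceil n_{\text{voting}}/k'\rceil-1$ represented members. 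Second, in the yes direction your claim that every $\ell$-cohesive group with $\ell\ge2$ consists only of padding voters or of voters the committee fully satisfies is false: if two sets $S_a,S_b$ share elements, the corresponding $N_1$-voters together with $N_2$ form a $2$-cohesive group, and those $N_1$-voters may approve only one winner. The fix is quantitative rather than structural: such a group has size at least $2n_{\text{voting}}/k'$ and contains at most $n$ voters of $N_1$, so taking $|N_2|\ge k(n+1)$ (say) guarantees at least $\lceil n_{\text{voting}}/k'\rceil$ represented padding voters inside it. Third, the dummy candidates are both unnecessary and dangerous: with $k'=k+t$ and no gadget forcing the $t$ dummies into the committee, a set cover instance that is a no instance for budget $k$ may be coverable with $k+t$ sets, which breaks the reduction. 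Drop the dummies, keep $k'=k$, and let $N_2$ approve exactly $\{c_1,\dots,c_m\}$.

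For comparison, the paper's proof does not aim at the maximum degree at all. It builds a gadget instance in which a $(k-3)$-cohesive group $V$ forces $\bar{k}$ committee slots onto set-candidates, a cyclic gadget on $d_1,d_2,d_3$ forces the remaining three slots, and a sparsity pre-processing of the set cover instance ($|S_i\cap S_j|\le\frac1{75}|\bar{U}|$) controls the $\ell$-cohesive groups with $\ell>1$; the yes/no gap is then $\frac{3n}{4k}$ versus $\frac{3n}{4k}-1$, realized by a $1$-cohesive group $U\cup U'$ containing voters who can never be represented. Your route, once the counting above is carried out, is shorter and avoids the sparsity assumption; but as submitted, the ``delicate bookkeeping'' you flag is exactly where the proof currently does not go through.
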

\begin{proof}
We again present a reduction from the W[2]-complete problem, the \emph{set cover problem}.
We make the following assumptions on the set cover instance $(\bar{U}=\{1,\ldots,\bar{n}\},\bar{S}={S_1,\ldots,S_{\bar{m}}},\bar{k})$ without loss of generality.
\begin{itemize}
    \item $\bar{k}<\bar{m}$. Otherwise, the set cover instance is trivial.
    \item $\bar{n}$ is a multiple of $9$. To achieve this, we can just add dummy elements that are covered by all the subsets.
    \item For any two subsets $S_i,S_j\in \bar{S}$, we have $|S_i\cap S_j|\leq\frac1{75}|\bar{U}|=\bar{n}$. To achieve this, for the original instance, we can create $74\bar{n}$ additional elements, create a subset that contains these elements, and increase $\bar{k}$ by $1$.
\end{itemize}
Given a set cover instance $(\bar{U}=\{1,\ldots,\bar{n}\},\bar{S}={S_1,\ldots,S_{\bar{m}}},\bar{k})$ satisfying the above three assumptions, we construct an \MDEJR instance as follows.
Let $\bar{n}'=\frac43\bar{n}$, and notice that $\bar{n}'$ is a multiple of $3$.

The \MDEJR instance has $n=\bar{n}'\cdot(\bar{k}+3)$ voters which form four groups: $U$, $U'$, $V$, and $W$.
It has $m=\bar{m}+4$ candidates, $c_1, \ldots, c_{\bar{m}},$ $c^\ast, d_1, d_2, d_3$. The size of the winner committee $k$ is set to $k=\bar{k}+3$.
Voter group $U$ and candidates $c_1,\ldots,c_{\bar{m}}$ correspond to the set cover instance, where voter $i$ approves candidate $c_j$ if and only if $i\in S_j$ in the set cover instance.
In addition, all voters in $U$ approve $c^\ast$.
Voter group $U'$ consists of $\bar{n}'-\bar{n}=\frac13\bar{n}$ voters, who only approve $c^\ast$.
Voter group $V$ consists of $(k-3)\cdot\frac{n}k$ voters, and each voter in $V$ approves $\{c_1,\ldots,c_{\bar{m}}\}$.
Finally, voters in $W$ are partitioned into $6$ groups $W_1,\ldots,W_6$ such that each group contains $\frac{n}{3k}$ voters.
Candidate $d_1$ is approved by voters in $W_1\cup W_2\cup W_3$, candidate $d_2$ is approved by voters in $W_3\cup W_4\cup W_5$, and candidate $d_3$ is approved by voters in $W_5\cup W_6\cup W_1$.
Since we have $|U|=\bar{n}=\frac{3n}{4k}$ and $|U'|=\frac13\bar{n}=\frac{n}{4k}$, the total number of voters sums up to $n$:
$$|U|+|U'|+|V|+|W|=\frac{3n}{4k}+\frac{n}{4k}+(k-3)\frac nk+6\cdot\frac{n}{3k}=n.$$
Since we have seen that $\bar{n}$ is a multiple of $3$, $\frac{n}{3k}$ is an integer.

We list the following two key observations:
\begin{enumerate}
    \item To make EJR degree positive, we need to select at least $\bar{k}$ candidates from $\{c_1,\ldots,c_{\bar{m}}\}$. To see this, voters in $V$ approve $\bar{m}\geq\bar{k}=k-3$ common candidates, so $V$ is a $(k-3)$-cohesive group since $|V|=(k-3)\cdot\frac nk$.
    If less than $\bar{k}=k-3$ candidates are selected from $\{c_1,\ldots,c_{\bar{m}}\}$, none of the voters in $V$ is represented, and the EJR degree becomes $0$.
    \item If less than $3$ candidates are selected from $\{d_1,d_2,d_3\}$, the EJR degree is at most $\frac{2n}{3k}$.
    To see this, the three sets of voters, $W_1\cup W_2\cup W_3$, $W_3\cup W_4\cup W_5$, and $W_5\cup W_6\cup W_1$, are $1$-cohesive groups with size exactly $\frac nk$.
    Since voters in $W_2$, $W_4$, and $W_6$ have only one approved candidate $d_1$, $d_2$, and $d_3$ respectively, if one of $\{d_1,d_2,d_3\}$ is not selected, voters from one of $W_2$, $W_4$, and $W_6$ are not represented, which implies the EJR degree is at most $2\cdot\frac{n}{3k}$.
\end{enumerate}
By the observations above, to have a winner committee with the EJR degree at least $\frac{2n}{3k}$, we need to select exactly $\bar{k}$ candidates from $c_1,\ldots,c_{\bar{m}}$ and all the $3$ candidates $d_1,d_2,d_3$.
In particular, we have no chance to select $c^\ast$, which induces the $1$-cohesive group $U\cup U'$ with $|U\cup U'|=\frac nk$.

If the set cover instance is a yes instance, we will show that we can achieve an EJR degree of at least $\frac{3n}{4k}$. 
We select those $\bar{k}$ candidates in $\{c_1,\ldots,c_{\bar{m}}\}$ that correspond to the set cover solution, and we additionally select $d_1,d_2$, and $d_3$.
For all the three $1$-cohesive groups within $W$, all voters are represented, and the EJR degree here is $\frac nk>\frac{3n}{4k}$.
For the $1$-cohesive group $U\cap U'$, all voters in $U$ are represented (since the set cover instance is a yes instance), and the EJR degree here is exactly $|U|=\bar{n}=\frac34\bar{n}'=\frac{3n}{4k}$.
Now we reason about those $\ell$-cohesive groups with $\ell>1$.
Notice that, for at least $2\cdot\frac{n}k$ voters to approve at least $2$ common candidates, the $2$ common candidates can only come from $\{c_1,\ldots,c_{\bar{m}}\}$ (since each of $c^\ast,d_1,d_2,d_3$ is approved by only $n/k$ voters).
For any $c_i,c_j\in \{c_1,\ldots,c_{\bar{m}}\}$, by our third assumption on the set cover instance, the voters that approve both $c_i$ and $c_j$ are those in $V$ together with at most $\frac1{75}\bar{n}$ voters in $U$.
Note also that all voters in $V$ approve $\bar{k}=k-3$ candidates in the winner committee and $k-3$ is the maximum value of $\ell$ for the existence of $\ell$-cohesive groups.
Given any $\ell$-cohesive group with $\ell>1$, at most $\frac1{75}\bar{n}$ voters are not represented.
Thus, the EJR degree for those $\ell$-cohesive groups with $\ell>1$ is at least
$$\frac nk-\frac1{75}\cdot\bar{n}=\frac nk-\frac1{75}\cdot\frac{3n}{4k}=\frac{99n}{100k}>\frac{3n}{4k}.$$
In conclusion, we have shown that the EJR degree for the committee constructed above is $\frac{3n}{4k}$.

If the set cover instance is a no instance, we will show that the EJR degree for any committee is strictly less than $\frac{3n}{4k}$.
Suppose for the sake of contradiction that this is not the case, and we have a winner committee with an EJR degree of at least $\frac{3n}{4k}$.
Since $\frac{3n}{4k}>\frac{2n}{3k}$, our two observations implies that we have to select $\bar{k}$ candidates from $c_1,\ldots,c_{\bar{m}}$ and we have to select all of $d_1,d_2,d_3$.
Since the set cover instance is a no instance, at least one voter in $U$ has no approved candidate in the winner committee.
Moreover, since $c^\ast$ is not selected, all voters in $U'$ do not have an approved winner candidate.
Therefore, in the $1$-cohesive group $U\cup U'$ of size $n/k$, at least $1+|U'|=1+\frac13\bar{n}=\frac{n}{4k}+1$ voters are not represented.
As a result, the EJR degree is at most $\frac{3n}{4k}-1$, which contradicts our assumption that the EJR degree is at least $\frac{3n}{4k}$.
\end{proof}

\subsection{Fixed-Parameter-Tractability with Parameters $k$ and Maximum (E)JR Degree}
\label{sect:FPTtwo}
We have seen that both \MDJR and \MDEJR are still computationally hard even parameterized by $k$.
Thus, to make the problems tractable, different choices of the parameters or additional parameters are needed.

If we choose the number of candidates $m$ as the parameter, it is easy to verify that both \MDJR and \MDEJR are fixed-parameter-tractable. To see this, we can enumerate all the $\binom{m}{k}$ committees. For each committee, we can compute the (E)JR degree in $O(2^mm^2n)$ time~\cite{aziz2018complexity}. At last, we select the committee that achieves the maximum (E)JR degree. 

Another natural choice for the parameter is the maximum achievable (E)JR degree.
Fortunately, both \MDJR and \MDEJR become tractable if parameterized by both $k$ and the maximum (E)JR degree.
In the next two sections, we use $c_{\text{max}}$ to denote the maximum (E)JR degree.

\subsubsection{Algorithm for \MDJR}
Our starting point is the algorithm GreedyAV (Algorithm~\ref{alg: greedyAV}) proposed by~\citet{DBLP:journals/scw/AzizBCEFW17}.
We show the following property for GreedyAV which is the key for our algorithm.
It states that the algorithm GreedyAV also gives us the optimal JR degree if $n$ is large enough.
 
\begin{proposition} \label{prop: greedy opt}
    Given any instance with the maximum achievable JR degree of $c_{\max}$ and $n>k^2(c_{\max}-1)$, GreedyAV will output a committee achieving JR degree $c_{\max}$.
\end{proposition}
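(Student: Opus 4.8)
The plan is to combine a contradiction argument with the earlier counting analysis used in Proposition~\ref{prop:MDJR-approximation}. Suppose GreedyAV outputs a committee $W$ whose JR degree is some $c<c_{\max}$, i.e.\ $c\le c_{\max}-1$. Then there is a cohesive group in which at most $c$ voters are represented by $W$. This means there is a candidate $c^\ast\notin W$ together with at least $\frac nk - c + 1 \ge \frac nk - (c_{\max}-1)$ voters who approve $c^\ast$ but approve no candidate in $W$; call this set of ``uncovered'' voters $B$, so $|B|\ge \frac nk - (c_{\max}-1)$.

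Next I would track the coverage of GreedyAV round by round, exactly as in Proposition~\ref{prop:MDJR-approximation}. In the first iteration, GreedyAV covers at least $\frac nk$ voters, since there is at least one cohesive group, so some candidate is approved by at least $\frac nk$ voters. In each subsequent iteration $j=2,\dots,k$, the candidate $c^\ast$ is still available (it is never selected, since all of $B$ remains uncovered), and all $|B|$ voters of $B$ are still present in the current voter set, so the greedy choice covers at least $|B|\ge \frac nk - (c_{\max}-1)$ voters. Summing the coverage over all $k$ iterations, and noting these covered sets are disjoint, the total number of voters is at least
\begin{equation*}
\frac nk + (k-1)\left(\frac nk - (c_{\max}-1)\right) = n - (k-1)(c_{\max}-1).
\end{equation*}
This by itself does not yet exceed $n$, so I need the extra $|B|$ voters: the voters in $B$ are never covered by GreedyAV at all (they approve nothing in $W$), hence they are disjoint from all $k$ covered sets. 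Adding them gives a total of at least $n - (k-1)(c_{\max}-1) + |B| \ge n - (k-1)(c_{\max}-1) + \frac nk - (c_{\max}-1) = n + \frac nk - k(c_{\max}-1)$.

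Finally, the hypothesis $n > k^2(c_{\max}-1)$ gives $\frac nk > k(c_{\max}-1)$, so the total above is strictly greater than $n$, a contradiction. Hence no such $c<c_{\max}$ exists, and GreedyAV achieves JR degree $c_{\max}$. The only point requiring a little care — and the place I would double-check — is the claim that the $|B|$ uncovered voters are genuinely disjoint from every set of voters removed during the $k$ iterations; this follows because a voter removed in iteration $j$ approves the candidate added in that iteration, which lies in $W$, whereas voters in $B$ approve no candidate of $W$. A second minor subtlety is handling the ceiling/floor in $\frac nk$: since the JR degree is an integer and the least-satisfying cohesive group has size exactly $\lceil n/k\rceil$, the bound $|B| \ge \lceil n/k\rceil - (c_{\max}-1) \ge \frac nk - (c_{\max}-1)$ still holds, so the strict inequality is preserved. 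Everything else is the same disjointness-and-counting bookkeeping already carried out in Proposition~\ref{prop:MDJR-approximation}.
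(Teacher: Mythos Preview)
Your proof is correct and rests on exactly the same counting argument as the paper's; the only difference is presentation. The paper observes in two lines that $n>k^2(c_{\max}-1)$ rearranges to $c_{\max}-1<\frac{n}{k^2}$ and then invokes Proposition~\ref{prop:MDJR-approximation} (GreedyAV has JR degree at least $\frac{n}{k^2}$, hence, being an integer, at least $c_{\max}$) as a black box, whereas you unfold that proposition's coverage-counting argument inline. One small slip: the intermediate bound ``at least $\frac nk - c + 1$'' is not justified (you only have $|B|\ge\lceil n/k\rceil-c\ge\frac nk-c$), but you never actually use the ``$+1$'' and you explicitly revert to the correct bound $|B|\ge\frac nk-(c_{\max}-1)$ later, so the argument goes through.
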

\begin{proof}
    $n>k^2(c_{\max}-1)$ implies $c_{\max}<\frac{n}{k^2}+1$.
    This proposition then follows from Proposition~\ref{prop:MDJR-approximation}.
\end{proof}

Given any instance with $n>k^2(c_{\text{max}}-1)$, GreedyAV can achieve the maximum JR degree. Hence, the remaining case is $n\le k^2(c_{\text{max}}-1)$. Given any instance $(N, C, \mathbf{A}, k)$, each candidate $c_i\in C$ can be seen as a subset of $N$ including all voters that approve $c_i$. Hence, there are at most $2^n$ different types of candidates, implying that every instance corresponds to an equivalent instance with $m\le 2^n$. Therefore, we can decide whether there exists a committee providing JR degree of $c$ by enumerating all the committees when $n\le k^2(c-1)$, which can be computed with running time 
\begin{equation*}
 \binom{m}{k}\le \binom{2^n}{k}\le  \binom{2^{k^2(c-1)}}{k}= f(k,c).
\end{equation*}

\begin{algorithm}[!htb]
\caption{\MDJR Voting Rule}
\label{alg: MDJR}
\KwIn{An instance $\mathcal{I} = (N, C, \mathbf{A}, k))$}
\KwOut{A winning committee $W$ of size $k$}
$W\gets \text{GreedyAV}(\mathcal{I})$

\For{$c: \lceil\frac{n}{k^2}\rceil$ to $\lfloor\frac{n}{k}\rfloor$}{
    Enumerate all those $\binom{m}{k}$ possible committees to see if there is a committee $W^*$ achieving JR degree $c$\;
    \eIf{$W^*$ exists}{
        $W\gets W^*$
    }{
        \Return $W$
    }
}
\end{algorithm}

Our algorithm is described in Algorithm~\ref{alg: MDJR}.
    We prove that Algorithm~\ref{alg: MDJR} outputs the committee that achieves the maximum JR degree and runs in time $f(k,c_{\text{max}})\cdot \text{poly}(m,n)$.
    
    \paragraph{Running Time.} GreedyAV can be computed in polynomial time. It is easy to see that the number of \textbf{for} loop is bounded by $O(\frac{n}{k})$. Then, we only need to justify the running time of deciding the condition of \textbf{if} branch. Since $c>\frac{n}{k^2}$, we have $n<k^2c$. As we mentioned before, we can just enumerate all committees in running time $\binom{2^{k^2c}}{k}$ and decide the JR degree for each committee in polynomial time.

    \paragraph{Correctness.} If \MDJR voting rule executes \textbf{else} branch in the first iteration of \textbf{for} loop, there does not exist a committee with JR degree $c$ for $c>\frac{n}{k^2}$. Hence, we have $c_{\text{max}}\le \frac{n}{k^2}$. From Proposition \ref{prop: greedy opt}, we know that GreedyAV achieves JR degree $c_{\text{max}}$. For the other cases, suppose that the \textbf{else} branch is executed when $c=c_{\text{end}}$. Then there does not exist a committee with JR degree $c_{\text{end}}$. The latest $W$ is the committee that provides JR degree $(c_{\text{end}}-1)$, which achieves the maximum JR degree.

\subsubsection{Algorithm for \MDEJR}

For \MDEJR, we prove the following observation in a similar spirit to Proposition~\ref{prop: greedy opt}, which shows that the local search variant of PAV (Algorithm~\ref{alg:LS-PAV}) can achieve maximum EJR degree if $n$ is sufficiently large and $\lambda$ is sufficiently small.

\begin{proposition} \label{prop: LSPAV_IS_GOOD}
    Given any instance with the maximum achievable EJR degree of $c_{\max}$. For any $\lambda\in[0,\frac n{k^2})$ satisfying $n> k(k+1)(c_{\max}-1)+\lambda k^2$, $\lambda$-LS-PAV will output the committee achieving EJR degree $c_{\max}$.
\end{proposition}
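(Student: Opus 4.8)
The plan is to deduce this directly from Proposition~\ref{prop:MDEJR-approximation} together with a rounding (integrality) argument, in exactly the same spirit as how Theorem~\ref{thm:MDEJR-approximation} was derived from Proposition~\ref{prop:MDEJR-approximation}, but now keeping the parameters $\lambda$ and $c_{\max}$ symbolic rather than plugging in $\lambda=\frac{1}{2k^2}$.

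First I would apply Proposition~\ref{prop:MDEJR-approximation}: since $\lambda\in[0,\frac{n}{k^2})$, the committee $W$ returned by $\lambda$-LS-PAV achieves EJR degree at least
\[
c^\ast \;=\; \frac{n}{k(k+1)} - \lambda\frac{k}{k+1}.
\]
Next I would unpack the hypothesis $n> k(k+1)(c_{\max}-1)+\lambda k^2$ by dividing both sides by $k(k+1)>0$, which gives
\[
\frac{n}{k(k+1)} \;>\; (c_{\max}-1) + \lambda\frac{k}{k+1},
\]
that is, $c^\ast > c_{\max}-1$. Hence the EJR degree of $W$ is strictly greater than $c_{\max}-1$.

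To finish, I would invoke that the EJR degree of any committee is a non-negative integer (it is a count of voters) and, by the very definition of $c_{\max}$, is at most $c_{\max}$. An integer that is strictly larger than $c_{\max}-1$ and at most $c_{\max}$ must equal $c_{\max}$; therefore $W$ achieves EJR degree exactly $c_{\max}$, as claimed.

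There is no real obstacle here beyond bookkeeping: the argument is a one-step corollary of Proposition~\ref{prop:MDEJR-approximation}. The only point that deserves a moment of care is the interplay between strictness and rounding — but since the hypothesis $n>\cdots$ is a strict inequality, the derived bound $c^\ast>c_{\max}-1$ is also strict, which is precisely what the integrality step needs (a non-strict bound $c^\ast\ge c_{\max}-1$ would not suffice). One should also note in passing that the side condition $\lambda<\frac{n}{k^2}$ required by Proposition~\ref{prop:MDEJR-approximation} is automatically compatible with the hypothesis, since $n>\lambda k^2$ follows from $n> k(k+1)(c_{\max}-1)+\lambda k^2 \ge \lambda k^2$ when $c_{\max}\ge 1$.
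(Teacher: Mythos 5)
Your proof is correct and follows essentially the same route as the paper: rearrange the hypothesis to get $c^\ast=\frac{n}{k(k+1)}-\lambda\frac{k}{k+1}>c_{\max}-1$, invoke Proposition~\ref{prop:MDEJR-approximation}, and conclude by integrality of the EJR degree together with the maximality of $c_{\max}$. The paper's own proof is just a terser version of exactly this argument.
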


\begin{proof}
    $n> k(k+1)(c_{\max}-1)+\lambda k^2$ implies $c_{\max}<\frac{n}{k(k+1)}-\lambda\frac{k}{k+1}+1$.
    Proposition~\ref{prop:MDEJR-approximation} implies this proposition.
\end{proof}

Since Proposition \ref{prop: LSPAV_IS_GOOD} holds for every initial committee $W$ in $\lambda$-LS-PAV, by considering the initial committee $W$ being the one with the maximum PAV-score and $\lambda=0$ (in which case the while-loop is never executed), we have the following corollary.
In contrast to our counterexample in Appendix~\ref{exxample : PAV fail mdejr}, this corollary shows the success of PAV for large enough $n$, which may be of independent interest.
\begin{corollary}
    Given any instance with the maximum achievable EJR degree of $c_{\max}$. If $n> k(k+1)(c_{\max}-1)$, PAV has an EJR degree of $c_{\max}$.
\end{corollary}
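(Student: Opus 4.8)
The plan is to obtain this corollary as a direct specialization of Proposition~\ref{prop: LSPAV_IS_GOOD}, taking $\lambda=0$ and letting the initial committee of $\lambda$-LS-PAV be an arbitrary committee $W$ that maximizes the PAV-score. First I would line up the hypotheses: with $\lambda=0$ the condition $n>k(k+1)(c_{\max}-1)+\lambda k^2$ in Proposition~\ref{prop: LSPAV_IS_GOOD} becomes exactly the corollary's hypothesis $n>k(k+1)(c_{\max}-1)$, and $\lambda=0$ lies in $[0,n/k^2)$ trivially. Next I would argue that $0$-LS-PAV, started from a max-PAV committee $W$, returns $W$ unchanged: its while-loop searches for a pair $(c^+,c^-)$ with $\Delta(W,c^+,c^-)\ge 0$, and since no swap can strictly increase the PAV-score, the output coincides with $W$; Proposition~\ref{prop: LSPAV_IS_GOOD} then forces $W$ to have EJR degree $c_{\max}$. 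As $W$ was an arbitrary PAV-winning committee and $c_{\max}$ is by definition the largest attainable EJR degree, this yields that PAV has EJR degree exactly $c_{\max}$.

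The only delicate point is the boundary case $\Delta(W,c^+,c^-)=0$, where a max-PAV committee could admit a score-preserving swap so that the literal while-condition $\Delta\ge 0$ is met. I would handle this by inspecting the inequality chain in the proof of Proposition~\ref{prop:MDEJR-approximation}: that chain ends with a \emph{strict} inequality $\sum_{c^-\in W}\Delta(W,c^+,c^-)>\lambda k$, so for $\lambda=0$ any committee whose EJR degree is strictly below $c^\ast=\frac{n}{k(k+1)}$ must possess a swap with $\Delta>0$, which is impossible for a max-PAV committee. Since the corollary's hypothesis rewrites as $c_{\max}-1<\frac{n}{k(k+1)}$, a max-PAV committee cannot have EJR degree $\le c_{\max}-1$, hence its EJR degree equals $c_{\max}$. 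An alternative that avoids touching the proof of Proposition~\ref{prop:MDEJR-approximation} is to use an arbitrarily small $\lambda>0$ rather than $0$: a max-PAV committee is then a genuine fixed point of $\lambda$-LS-PAV, and because $n>k(k+1)(c_{\max}-1)$ is strict, a sufficiently small $\lambda$ still satisfies $n>k(k+1)(c_{\max}-1)+\lambda k^2$, so Proposition~\ref{prop: LSPAV_IS_GOOD} applies without modification.

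I expect this $\Delta=0$ edge case to be the only real obstacle; the remainder is routine bookkeeping. For completeness I would also record that a PAV-winning committee exists because the PAV-score is a function on the finite collection of size-$k$ subsets of $C$, so the claim is non-vacuous, and that the bound EJR degree $\le c_{\max}$ always holds by definition of $c_{\max}$, which is what converts the established inequality EJR degree $\ge c_{\max}$ into the stated equality.
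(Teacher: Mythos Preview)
Your proposal is correct and follows essentially the same route as the paper: specialize Proposition~\ref{prop: LSPAV_IS_GOOD} with $\lambda=0$ and take the initial committee to be a PAV-optimal one so that $\lambda$-LS-PAV returns it unchanged. If anything, you are more careful than the paper, which simply asserts that ``the while-loop is never executed'' without addressing the $\Delta=0$ boundary case; your two fixes (reading off the strict inequality $\sum_{c^-}\Delta>\lambda k$ from the proof of Proposition~\ref{prop:MDEJR-approximation}, or passing to an arbitrarily small $\lambda>0$) both cleanly close that gap.
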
   

By setting $\lambda=\frac{n}{k(k+1)}$ in Proposition~\ref{prop: LSPAV_IS_GOOD}, we have the following corollary, which is crucial for our algorithm.

\begin{corollary} \label{Coro : special_case}
    Given any instance $(N, C, \mathbf{A}, k)$ with maximum achievable EJR degree $c_{\max}$. If $n> k(k+1)^2(c_{\max}-1)$, $\frac{n}{k(k+1)}$-LS-PAV outputs a committee with EJR degree $c_{\max}$.
\end{corollary}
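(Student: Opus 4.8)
The plan is to derive Corollary~\ref{Coro : special_case} directly from Proposition~\ref{prop: LSPAV_IS_GOOD} by instantiating the free parameter $\lambda$ to the value $\lambda = \frac{n}{k(k+1)}$. Two routine checks are needed. First, I must confirm that this choice of $\lambda$ is admissible, i.e., $\lambda \in [0, \frac{n}{k^2})$; this holds because $k(k+1) = k^2 + k > k^2$, so $\frac{n}{k(k+1)} < \frac{n}{k^2}$. Second, I must verify that the hypothesis $n > k(k+1)^2(c_{\max}-1)$ of the corollary is exactly what the side condition $n > k(k+1)(c_{\max}-1) + \lambda k^2$ of Proposition~\ref{prop: LSPAV_IS_GOOD} becomes under this substitution: since $\lambda k^2 = \frac{n k^2}{k(k+1)} = \frac{nk}{k+1}$, the side condition reads $n - \frac{nk}{k+1} > k(k+1)(c_{\max}-1)$, and since $n - \frac{nk}{k+1} = \frac{n}{k+1}$ this is $\frac{n}{k+1} > k(k+1)(c_{\max}-1)$, i.e.\ precisely $n > k(k+1)^2(c_{\max}-1)$. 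With both checks in place, Proposition~\ref{prop: LSPAV_IS_GOOD} states that $\frac{n}{k(k+1)}$-LS-PAV outputs a committee with EJR degree $c_{\max}$, which is the claim.

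Alternatively --- and this is essentially what unfolds inside Proposition~\ref{prop: LSPAV_IS_GOOD} --- one can appeal to Proposition~\ref{prop:MDEJR-approximation} directly: it guarantees that $\lambda$-LS-PAV outputs a committee of EJR degree at least $c^\ast = \frac{n}{k(k+1)} - \lambda\frac{k}{k+1}$, and plugging in $\lambda = \frac{n}{k(k+1)}$ collapses this to $c^\ast = \frac{n}{k(k+1)}\cdot\bigl(1 - \frac{k}{k+1}\bigr) = \frac{n}{k(k+1)^2}$. The hypothesis $n > k(k+1)^2(c_{\max}-1)$ then gives $c^\ast > c_{\max} - 1$. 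Since the EJR degree is an integer that is at most $c_{\max}$ by maximality of $c_{\max}$, and strictly greater than $c_{\max} - 1$, it must equal $c_{\max}$.

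There is essentially no genuine obstacle here: the content is entirely in Proposition~\ref{prop:MDEJR-approximation} / Proposition~\ref{prop: LSPAV_IS_GOOD}, and the corollary is an arithmetic specialization. The only points requiring a word of care are (i) checking that $\lambda$-LS-PAV is well-defined for this (rather large) $\lambda$ --- it terminates because the PAV-score is bounded above by $n\sum_{j=1}^k \frac1j$ while strictly increasing by at least $\lambda > 0$ at every swap, so the while-loop halts --- and (ii) the integrality step that upgrades ``EJR degree $> c_{\max}-1$'' to ``EJR degree $= c_{\max}$''. I would present the argument in a single short paragraph mirroring the proof of Proposition~\ref{prop: LSPAV_IS_GOOD}.
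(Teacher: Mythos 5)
Your proposal is correct and follows exactly the paper's route: the paper obtains the corollary by setting $\lambda=\frac{n}{k(k+1)}$ in Proposition~\ref{prop: LSPAV_IS_GOOD}, which is precisely your first argument, and your arithmetic verification of the side condition (and of $\lambda<\frac{n}{k^2}$) is accurate. The paper simply states the instantiation without spelling out these checks, so your write-up is, if anything, slightly more explicit than the original.
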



Based on Corollary \ref{Coro : special_case}, we can use a similar way to \MDJR to design our algorithm. In particular, we can decide whether there exists a committee providing EJR degree $c$ by enumerating all the committees when $n\le k(k+1)^2(c_{\max}-1)$, which can be computed in running time $f(k,c)$. Our algorithm is presented in Algorithm~\ref{alg: MDEJR}, which achieves the maximum EJR degree and runs in time $f(k,c_{\max})\cdot \text{poly}(m,n)$. The correctness of our algorithm follows from arguments similar to \MDJR.
The time complexity analysis is also similar, except that we need to show $\frac{n}{k(k+1)}$-LS-PAV runs in polynomial time.
To see this, each swap operation can be executed in polynomial time with a minimum score increment by $\lambda=\frac{n}{k(k+1)}$, and the maximum PAV-score is $n \cdot (1+\frac{1}{2}+\cdots+\frac{1}{k})=O(n\ln k)$.
Thus, the number of while-loop executions is bounded by $O(k^2\ln k)$.

\begin{algorithm}[tb]
\caption{\MDEJR Voting Rule}
\label{alg: MDEJR}
\KwIn{An instance $\mathcal{I} = (N, C, \mathbf{A}, k)$}
\KwOut{A winning committee $W$ of size $k$}
$W\gets \frac{n}{k(k+1)}\text{-LS-PAV}(\mathcal{I})$

\For{$c: \lceil\frac{n}{k(k+1)^2}\rceil$ to $\lfloor\frac{n}{k}\rfloor$}{
    Enumerate all those $\binom{m}{k}$ possible committees to see if there is a committee $W^*$ achieving EJR degree $c$\;
    \eIf{$W^*$ exists}{
        $W\gets W^*$
    }{
        \Return $W$
    }
}
\end{algorithm}

    


\section{Conclusion and Future Work} \label{sec: con}
We initialize the study of the (E)JR degree and study its computational complexity and approximability. The (E)JR degree describes (E)JR from a quantitative perspective, which can help us better compare 
different committees. 
When explaining (E)JR from a stability perspective, i.e., for any $\ell$-cohesive group, if this group deviates and constructs $\ell$ winners, then at least one member does not want to deviate, as the current satisfaction is already $\ell$, which is the maximum satisfaction possible with $\ell$ winners. However, in reality, if only one person is represented, they can easily be persuaded to deviate by the rest of the group. Hence, if a committee provides a larger (E)JR degree, the possibility of deviation will be reduced. Moreover, we give complete pictures of both optimization problems, from a general NP-hardness with almost tight inapproximability to a parameterized complexity analysis with some natural parameters. 

Many potential further works can be explored. For example, one can explore whether the negative results can be circumvented by considering restricted domains of preferences \citep{DBLP:conf/ijcai/ElkindL15,DBLP:conf/ijcai/Yang19a}. Another direction is to consider some other quantitative measurements with respect to (E)JR, e.g., using the ratio instead of the number. 
As we have discussed earlier, our definition of EJR degree using numbers is more aligned with the original definition of EJR, whereas the definition using ratios/fractions, by prioritizing more on $\ell$-cohesive groups with large $\ell$, is more aligned with the notion of Individual Representation~\cite{DBLP:conf/sigecom/Brill023} discussed in Sect.~\ref{sec:relatedwork}. 
We believe both the ``number version'' and the ``ratio version'' are worth studying. Which choice is better depends on the specific applications.


\section*{Acknowledgments}
The research of Biaoshuai Tao was supported by the National Natural Science Foundation of China (No. 62472271) and the Key Laboratory of Interdisciplinary Research of Computation and Economics (Shanghai University of Finance and Economics), Ministry of Education.

The authors gratefully appreciate the valuable comments from IJCAI'24, AAAI'25, and AAMAS'25 reviewers, which are very helpful in improving this paper. 
We especially appreciate that one of the AAMAS'25 reviewers points out the $\frac1k$-approximation of the greedy algorithm, which leads to Theorem~\ref{thm:MDJR-approximation} and \ref{thm:MDEJR-approximation} in this paper.



\bibliographystyle{plainnat}
\bibliography{reference}

\clearpage
\appendix

\section{Parameterized Complexity Basics}
\label{append:parameterized}
A \emph{parameterized problem} is a language $L\subseteq\Sigma^\ast\times\mathbb{Z}^+$ (where $\Sigma$ is the alphabet set) and $p$ in an instance $(x,p)\in\Sigma^\ast\times\mathbb{Z}^+$ is called the \emph{parameter}.
A parameterized problem is \emph{fixed-parameter tractable} (FPT), or a parameterized problem is in the complexity class FPT, if there exists an algorithm that decides the problem with running time bounded by $f(p)\cdot\text{poly}(|(x,p)|)$, where $(x,p)$ is the instance with length $|(x,p)|$, poly$(\cdot)$ is a polynomial function, and $f(\cdot)$ is a computable function.
The definition of parameterized problems and FPT can be straightforwardly generalized to optimization problems.

As introduced above, the instance of a parameterized problem consists of two parts: the ``original instance'' $x$ and the parameter $p$.
In our case of \MDJR, $x$ corresponds to a \MDJR instance $(N, C, \mathbf{A}, k)$, and $p$ is a parameter that we need to specify.
For example, in Section~\ref{sect:W2hard}, we select the winner committee size $k$ as the parameter.
If there is an algorithm that solves \MDJR problem with running time $f(k)\cdot\text{poly}(m,n)$, then we will say that \MDJR is fixed-parameter tractable parameterized by $k$.
(However, as we will see later, this is unlikely the case.)
Here, $f(k)$ can be any computable functions such as $f(k)=2^k$ or $f(k)=2^{2^{k}\cdot k!}$.
Specifically, the running time is not required to be polynomial in terms of the parameter, as long as the (possibly) super-polynomial factor with the parameter in the time complexity can be separated from the polynomial function of the length of the original instance.
Naturally, for the algorithm to be useful in practice, we should choose parameters that are not large in practical scenarios, such as the size of the committee (which is usually significantly smaller than the number of voters).

It is also straightforward to generalize the parameterized problems with more than one parameter.
For example, with two parameters $p_1$ and $p_2$, the problem is in FPT if there exists an algorithm with running time bounded by $f(p_1,p_2)$ times a polynomial function of the input length, where $f(\cdot,\cdot)$ now becomes a computable function of two variables.
In Section~\ref{sect:FPTtwo}, we consider the choice of two parameters: the committee size $k$ and the maximum achievable JR degree $c_{\text{max}}$, and we will show \MDJR is FPT for this choice.

Downey and Fellows define the class of parameterized problems, the \emph{W-hierarchy}, in an attempt to capture hard problems.
Problems that are complete in the classes of W-hierarchy (such as W[1], W[2], etc) are believed to be fixed-parameter intractable.
For the purpose of this paper, we omit the detailed definition of the W-hierarchy, and we refer the readers to Reference~\cite{downey1995fixed} for details.

To show a problem $L$ is W[$t$]-hard, we can give a \emph{parameterized reduction} from a known W[$t$]-hard problem $\bar{L}$.
We say that $\bar{L}$ parameterized reduce to $L$ if, given an $\bar{L}$ instance $(x,p)$, we can construct an $L$ instance $(x',p')$ such that
\begin{enumerate}
    \item $(x,p)$ is a yes instance if and only if $(x',p')$ is a yes instance;
    \item $p'\leq g(p)$ for some computable function $g$;
    \item the construction can be done in $f(k)\cdot\text{poly}(|x|)$ time for some computable function $f$.
\end{enumerate}
It is straightforward to verify that, if $\bar{L}$ parameterized reduces to $L$, then $L$ being fixed-parameter tractable implies that $\bar{L}$ is fixed-parameter tractable.

The notion of W[$t$]-hardness can be extended to optimization problems, just as we can use the word ``NP-hardness'' to describe an optimization problem.
Correspondingly, taking the maximization problem for example, description 1 above is changed such that a yes instance $(x,p)$ is mapped to instances $(x',p')$ with the objective value of at least a certain threshold $\theta$ while a no instance $(x,p)$ is mapped to instances with objective values less than $\theta$.

\section{Proof of Proposition~\ref{prop:difference}}
\label{append:difference}

Fix $\gamma>0$. We construct the following ballot instance.
Let $P$ be a large integer whose value is decided later, and let $L=2P^2-2P$.
The instance consists of $n=P^2(L+2)-2P-L$ voters and $2(L+2)$ candidates $c_1,\ldots,c_{2(L+2)}$.
The winner committee has size $k=2L$.
The ballot is described in the table below.

\begin{small}
    \begin{tabular}{cc}
\hline
  candidates   &  voters who approve the candidates\\
\hline
  $c_1,c_2$   & $1,2,\ldots,P^2$\\
  $c_3,c_4$   & $P^2,P^2+1,\ldots,2P^2-1$\\
  $c_5,c_6$   & $2P^2-1,2P^2,\ldots,3P^2-2$\\
  $c_7,c_8$ & $3P^2-2,3P^2-1,\ldots,4P^2-3$\\
  $\vdots$ & $\vdots$\\
  $c_{2L+1},c_{2L+2}$ & $Lp^2-(L-1),\ldots,(L+1)P^2-L$\\
  $c_{2L+3},c_{2L+4}$ & $\left\{\begin{array}{l}
       (L+1)P^2-L-P+1,\ldots,n \\
       1,2,\ldots,P
  \end{array}\right.$\\
  \hline
\end{tabular}
\end{small}

We begin by identifying all cohesive groups.
Let $V_i$ be the set of voters who approve the two candidates $c_{2i-1},c_{2i}$.
We have $|V_1|=|V_2|=\cdots=|V_{L+2}|=P^2$: this is easy to see for $V_1,\ldots,V_{L+1}$; for $V_{L+2}$, we have
$$|V_{L+2}|=n-((L+1)P^2-L-P+1)+1+P,$$
which equals to $P^2$ by substituting the value of $n=P^2(L+2)-2P-L$.
In addition, we have $P^2=2\cdot\frac nk$ (to see this equation, substitute the values of $n=P^2(L+2)-2P-L$, $L=2P^2-2P$, and $k=2L$), so all of $V_1,\ldots,V_{L+2}$ are $2$-cohesive groups, and it is easy to see that these are the only $2$-cohesive groups.
Those $1$-cohesive groups are exactly those subsets of each $V_i$ with at least $\frac nk=\frac{P^2}2$ voters.
We have identified all the cohesive groups.

By our construction, we have 
$$|V_i\cap V_j|=\left\{\begin{array}{ll}
    1 & \mbox{if }j=i+1\mbox{ and }j\leq L+1 \\
    P & \mbox{if }i=L+1\mbox{ and }j=L+2\\
    P & \mbox{if }i=L+2\mbox{ and }j=1\\
    0 & \mbox{otherwise}
\end{array}\right..$$

To maximize the EJR degree, it is easy to see that the optimal winner committee can only be $\{c_3,c_4,\ldots,c_{2L},c_{2L+3},c_{2L+4}\}$, in which case only voters in $V_{1}$ and $V_{L+1}$ (respectively) are not fully represented.
The optimal EJR degree is $P+1$.

To maximize the JR degree, we need to select at least one candidate from $\{c_{2i-1},c_{2i}\}$ for each $i=1,\ldots,L+2$, in which case all voters have at least $1$ approved candidate.
The optimal JR degree is the maximum possible value $\frac nk=\frac{P^2}2$.
The remaining $L-2$ candidates can be selected arbitrarily.

To see 1 in the proposition, we have $c^\ast_{\text{JR}}/c^\ast_{\text{EJR}}=\frac{P^2}{2P+2}$, which can be made larger than $\gamma$ by large enough $P$.

To see 2, the described winner committee optimal for the EJR degree has a JR degree of $P+1$, which is less than  $c^\ast_{\text{JR}}/\gamma$ for large enough $P$.

To see 3, for a winner committee that optimizes the JR degree, among the $L+2$ candidate groups $\{c_1,c_2\},\ldots,\{c_{2L+3},c_{2L+4}\}$, there are exactly four of them with only one selected candidate.
By the pigeonhole principle, there is at least one of the $L-1$ candidate groups $\{c_3,c_4\},\{c_5,c_6\},\ldots,\{c_{2L-1},c_{2L}\}$, with one selected candidate.
Since $V_i$ intersects with $\bigcup_{j\neq i}V_j$ by only two elements for each $i=2,\ldots,L$, the EJR degree in this case is at most $2$, which is less than $c^\ast_{\text{E1JR}}/\gamma$ for large enough $P$.

\section{Counterexample for PAV failing {\tt \textbf{MDEJR}}}
\label{exxample : PAV fail mdejr}

    Consider an instance with $m=3p+2$ candidates $c_1,c_2,\dots,c_{3p},d_1,d_2$ and $n=3p+(3p+2)+p+(9p^2-p-1)=(3p+1)^2$ voters. The committee size is set to $k=3p+1$. There are $4$ groups of voters, $D_1,D_2,T$, and $S$, with $|D_1|=3p$, $|D_2|=3p+2$, $|T|=p$, and $|S|=9p^2-p-1$. All voters in $D_1$ approve $d_1$ and $d_2$, and the $i$-th voter in $D_1$ approves $c_i$ additionally. All voters in $D_2$ approve $d_1$ and $d_2$. For those $p$ voters in group $T$, the $i$-th voter approves $c_{3i-2},c_{3i-1},c_{3i}$. All voters in $S$ approve $c_1,c_2,\dots,c_{3p}$.

    Note that any size-$k$ committee covers every voter, so the JR-degree is always $\frac{n}{k}=3p+1$. We only need to care about $\ell$-cohesive groups with $\ell\ge 2$. First, any $\ell$-cohesive group with $\ell\ge 4$ is a subset of $S$ since voters in the other three group have degree less than $4$. There is no $\ell$-cohesive group for $\ell\ge k-2=3p-1$ as $(3p-1)(3p+1)=9p^2-1>|S|$. 
    
    At least $3p-1$ candidates in $c_1,\dots,c_{3p}$ are selected, so any $\ell$-cohesive group with $\ell\ge 2$ has $\ell \frac{n}{k}>\frac{n}{k}3p+1$ voters that approve at least $\ell$ candidates in the committee.

    Any voter in $T$ together with some voters in $S$ forms a $2$-cohesive group or $3$-cohesive group, has at least $3p+1$ voters in which approve at least $3p+1$ candidates in the committee.
    
    In addition, $D_1 \cup D_2$ forms a $2$-cohesive group. Since there is only one voter (in $D_1$) approves some $c_i$ and $d_1$ (or $d_2$), $D_1 \cup D_2$ is the only $\ell$-cohesive group that contains voters in $D_1$ and $D_2$. Thus, the EJR-degree of the committee depends on the number of voters in $D_1 \cup D_2$ that approve at least $2$ candidates in the committee. There are only two kinds of different committees:
    \begin{description}
        \item[Case 1] If we select any $3p-1$ candidates in $\{c_1,\dots,c_{3p}\}$ and $d_1,d_2$ to form a winning committee, all voters in $D_1\cup D_2$ approve at least $2$ winning candidates, namely, $d_1$ and $d_2$. Therefore, the EJR degree reaches its maximum, $3p+1$.
        \item[Case 2] If we select all $3p$ candidates in $\{c_1,\dots,c_{3p}\}$ and one of $d_1,d_2$, the EJR-degree is $3p$. To see this, for the $2$-cohesive group $D_1\cup D_2$, all voters in $D_2$ only approve $1$ candidate in the winner committee, so the number of voters who approve $2$ winner candidates in this $2$-cohesive group is $3p$.
    \end{description}
    Consider the winner committee $W=\{c_1,\dots,c_{3p},d_1\}$ in Case $2$. If we swap $d_2$ and some $c_i$, the PAV-score increases by $\frac{1}{3}(|D_1|-1)$ due to voters in $D_1$, by $\frac{1}{2}|D_2|$ due to voters in $D_2$, while the PAV-score decreases by $\frac{1}{3}$ for voters in $T$ and by $\frac{1}{3p}|S|$ for voters in $S$. Therefore, the change in the PAV-score is
    \begin{equation*}\begin{aligned}
        \Delta(W,d_2,c_i) &= \frac{1}{3}(3p-1)+\frac{1}{2}(3p+2)-\frac{1}{3}-\frac{9p^2-p-1}{3p}\\
        &= -\left(\frac{1}{2}p-\frac{2}{3}-\frac{1}{3p}\right).
    \end{aligned}\end{equation*}
    For all $p\ge 2$, $\frac{1}{2}p-\frac{2}{3}-\frac{1}{3p}>0$, which implies Case $2$ has a larger PAV-score compared with Case $1$. The voting rule PAV outputs the committee in Case $2$, which fails to maximize the EJR degree.

\end{document}